\newtcolorbox{mybox}[2][]{%attach boxed title to top center
               = {yshift=-8pt},
  colback      = blue!6!white,
  colframe     = blue!1!black,
  halign       = flush left,
  fonttitle    = \bfseries\sffamily,
  colbacktitle = blue!90!black,
  title        = #2,#1,
  enhanced,
}
\newcommand{\be}{\begin{equation}}
\newcommand{\ee}{\end{equation}}
\newcommand{\ba}{\begin{aligned}}
\newcommand{\ea}{\end{aligned}}
\newcommand{\bc}{\begin{center}}
\newcommand{\ec}{\end{center}}
\newcommand{\beq}{\begin{equation}}
\newcommand{\eeq}{\end{equation}}
\newcommand{\beqq}{\begin{equation*}}
\newcommand{\eeqq}{\end{equation*}}
\newcommand{\beqa}{\begin{align}}
\newcommand{\eeqa}{\end{align}}
\newcommand{\barr}{\begin{array}}
\newcommand{\earr}{\end{array}}
\newcommand{\bi}{\begin{itemize}}
\newcommand{\ei}{\end{itemize}}
\newcommand{\Hi}{\mathcal{H}}
\newcommand{\Tr}{\ensuremath{\,\mathrm{Tr}}}
\newtheorem{lem}{Lemma}
\newtheorem{theo}{Theorem}
\newtheorem*{theo*}{Theorem}
\newtheorem*{prot*}{Protocol}
\newtheorem{defi}{Definition}
\begin{document}

\title{Certification of non-Gaussian states with operational measurements}

\author{Ulysse Chabaud$^{1,2}$}
\email{ulysse.chabaud@gmail.com}
\author{Gana\"el Roeland$^3$}
\author{Mattia Walschaers$^3$}
\author{Fr\'ed\'eric Grosshans$^2$}
\author{Valentina Parigi$^3$}
\author{Damian Markham$^{2,4}$}
\author{Nicolas Treps$^3$}

\affiliation{$^1$Universit\'e de Paris, IRIF, CNRS, France}
\affiliation{$^2$Sorbonne Universit\'e, LIP6, CNRS, 4 place Jussieu, Paris F-75005, France}
\affiliation{$^3$Laboratoire Kastler Brossel, Sorbonne Universit\'e, CNRS, ENS-PSL Research University, Coll\`ege de France, 4 place Jussieu, Paris F-75252, France}
\affiliation{$^4$JFLI, CNRS, National Institute of Informatics, University of Tokyo, Tokyo, Japan}

\date{\today}

%--------------------------------------------------------------------------------

\begin{abstract}

We derive a theoretical framework for the experimental certification of non-Gaussian features of quantum states using double homodyne detection. We rank experimental non-Gaussian states according to the recently defined stellar hierarchy and we propose practical Wigner negativity witnesses. We simulate various use-cases  ranging from fidelity estimation to witnessing Wigner negativity. Moreover, we extend results on the robustness of the stellar hierarchy of non-Gaussian states. 
Our results illustrate the usefulness of double homodyne detection as a practical measurement scheme for retrieving information about continuous variable quantum states.

\end{abstract}

%--------------------------------------------------------------------------------

\maketitle

%--------------------------------------------------------------------------------

\section{Introduction}

\noindent Recent years have witnessed many developments that are paving the road towards quantum technologies \cite{Monz1068,PhysRevLett.123.250503,Quantum-Suppremacy,jurcevic2020demonstration}. The physical backbone of these advances often relies on the discrete nature of certain quantum observables, known as the discrete-variable (DV) approach. In contrast, one can also develop quantum technologies based on systems that have a phase space representation, which leads to quantum observables with a continuum of possible measurement outcomes \cite{Braunstein2005,Pfister_2019,bourassa2020blueprint}. This continuous-variable (CV) approach has gradually made its way to the forefront in the development of quantum information protocols by the experimental realisations of large entangled cluster states in optical setups \cite{Larsen369,Asavanant373}, and quantum error correction in superconducting circuits \cite{QuantumErrorCorrectionGKP}.

CV quantum states are represented in phase space using quasiprobability distributions \cite{cahill1969density}, such as Glauber-Sudarshan, Wigner, or Husimi functions. Quantum states are then split into two main categories according to the shape of their representation in phase space: Gaussian states and non-Gaussian states. In CV quantum optics, Gaussian states and processes are well understood and routinely produced and processed experimentally, the deterministic preparation of highly entangled states over a large number of modes being a striking example of what they enable~\cite{Peng2012,Roslund2013,Chen2014,Yoshikawa2016}. These states and processes also feature an elegant theoretical description with the symplectic formalism~\cite{Lloyd2012}. However, their non-Gaussian counterparts are necessary for a large variety of quantum information processing tasks~\cite{Bartlett2002,mari2012,Rahimi-Keshari2016}.

Non-Gaussian states come in many varieties and can have a wide range of properties. The most benign members of this class of states are non-Gaussian mixtures of Gaussian states, which are generally not considered to be of interest for quantum information procession, because they can still be efficiently simulated via classical computers. Another important class of states are those with non-positive Wigner functions, which are required to reach a quantum computational advantage \cite{mari2012}. One can also find states with a positive Wigner function that cannot be written as a mixture of Gaussian states \cite{filip2011detecting}, these are sometimes referred to as quantum non-Gaussian states \cite{genoni2013detecting}. However, these classes are still crude, containing states with widely different properties in their phase-space representation and that are still hard to classify in term of usefulness in quantum protocols. 

To further explore the structure of CV quantum states in the single-mode-case the stellar hierarchy \cite{chabaud2020stellar} was introduced, where each level corresponds to the number of single-photon additions necessary to engineer the states, together with Gaussian unitary operations.

Such non-Gaussian features in CV quantum states are not only challenging to produce experimentally, they are also difficult to detect.  Homodyne tomography \cite{lvovsky2009continuous, Tiunov:20}, based on maximum-likelihood estimation, is a common tool in quantum optics to reconstruct the full quantum state, thus allowing to extract arbitrary features. However, when implementing this method the number of required measurements grows exponentially with the number of modes. Therefore, it is often more interesting to use a benchmark for the desired properties of the state \cite{eisert2019quantum}, rather than performing a full state tomography. A natural additional requirement for such a benchmark is an accompanying confidence interval, which is, again, hard to extract from maximum-likelihood tomography \cite{PhysRevA.95.022107,Blume_Kohout_2010,PhysRevLett.117.010404}. Machine learning methods have successfully been applied to benchmark the Wigner-negativity of highly multimode states \cite{cimini2020neural}, but these methods are most fruitful when good training data are available. In other words, a machine learning algorithm can only recognize the features of a state when it has seen many similar states before.

In this Manuscript, we provide a complementary approach for certifying the presence of certain features of quantum states, based on double homodyne detection \cite{paris1996quantum,ferraro2005gaussian}. This detection scheme is directly connected to the $Q$ function, which can be exploited for retrieving information about continuous variable quantum states~\cite{chabaud2019building} and for the verification of Boson Sampling output states~\cite{Aaronson2013,chabaud2020efficient}. Here we will use the techniques of \cite{chabaud2019building} to estimate quantum state fidelity, and from these estimates we produce a robust benchmark for the stellar rank of the state. This allows us to operationally identify the number of single-photon additions applied on a Gaussian states that is necessary to reach a given fidelity with a given quantum state. Furthermore, we can use the same techniques to construct witnesses for Wigner negativity. By simulating outcomes for double homodyne detection for realistic states (some of which were reconstructed from experimental data), we explore the experimental feasibility of this protocol. We show that with a realistic amount of loss and an achievable number of measurements, we can extract stellar rank and Wigner-negativity with a high degree of confidence.

\section{Retrieving information via phase space}
\label{sec:phasespace}

%--------------------------------------------------------------------------------

\subsection{Phase space measurements} 
\label{sec:phasespacemeasurements}

\begin{figure}
	\begin{center}
		\includegraphics[width=\columnwidth]{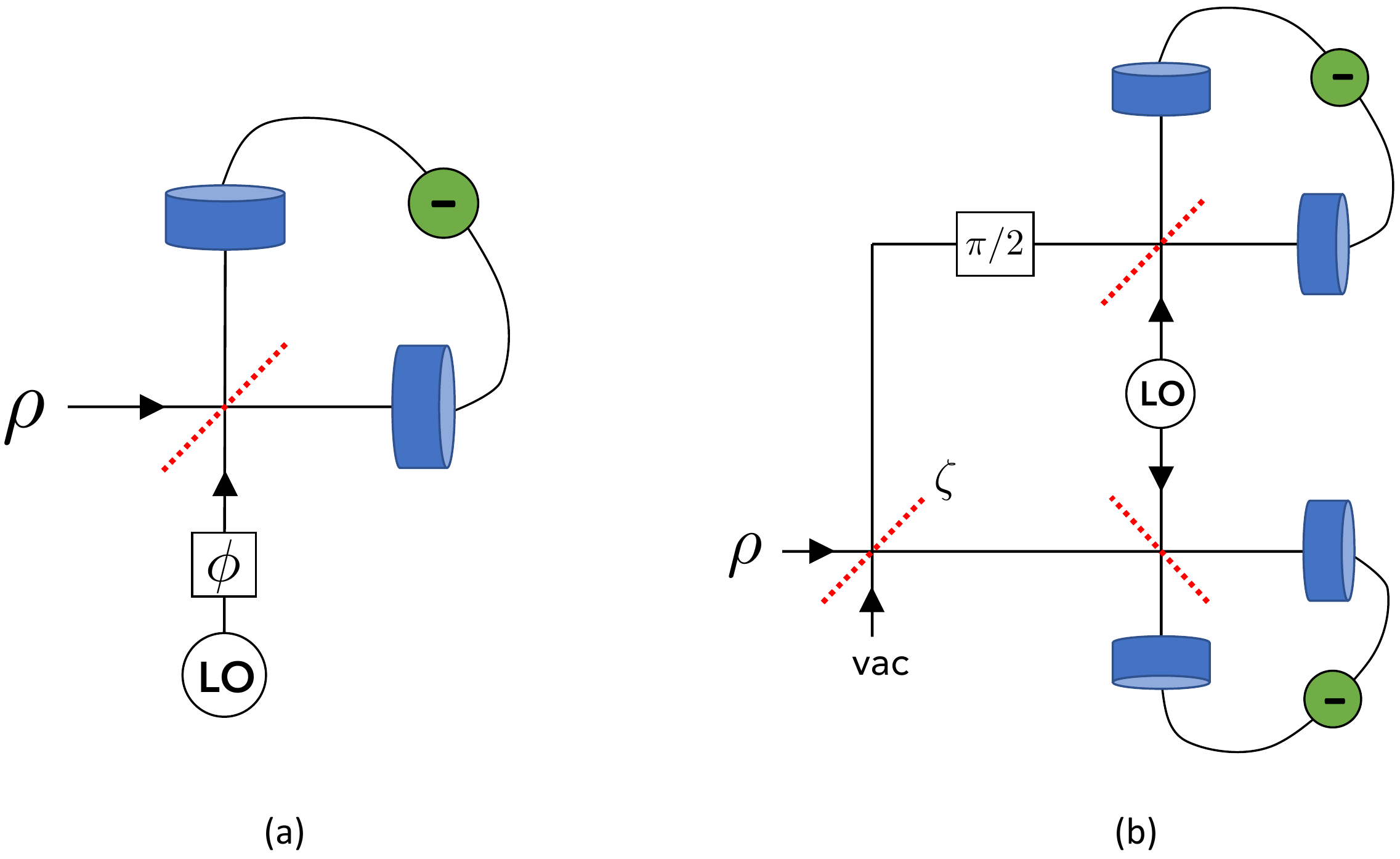}
		\caption{Phase space measurements. \textbf{(a)} Homodyne detection of the rotated quadrature $\hat x_\phi$. \textbf{(b)} Double homodyne detection with unbalancing parameter $\zeta$. The dashed red lines represent beam splitters and LO stands for Local Oscillator, i.e., a strong coherent state.}
		\label{fig:2homodyne}
	\end{center}
\end{figure}

\noindent Homodyne detection is the most widely used experimental procedure to perform quadrature measurements on an optical state. As described in (Fig.~\ref{fig:2homodyne}(a)), it consists in mixing the state to be measured on a balanced beam splitter with a reference field, and compute the difference between intensity measurements on both outputs. Depending on the relative phase $\phi$ between the fields, it yields the sampling of the quadrature observable $\hat x_\phi=\cos\phi\,\hat q+\sin\phi\,\hat p$.

%Homodyne detection consists in measuring a single quadrature $\hat x_\phi=\cos\phi\,\hat q+\sin\phi\,\hat p$ of a single-mode state by sampling from a marginal of its Wigner function parametrised by the phase $\phi$. Formally, homodyne detection yields a real outcome and projects the measured state onto an (unphysical) eigenstate of the quadrature operator $\hat x_\phi$. Experimentally, optical homodyne detection is implemented by mixing the state to be measured on a balanced beam splitter with a strong coherent state, the local oscillator, which has a fixed phase shift $\phi$. Then, the intensities for both output branches of the beam splitter are measured and their difference yields the value of the quadrature amplitude $\hat x_\phi$~\cite{lvovsky2009continuous} (Fig.~\ref{fig:2homodyne}(a)).

Provided many copies of a state are available, performing optical homodyne detection allows for the reconstruction of the Wigner function of the state. 
%This is achieved by sampling from marginals of the Wigner function for various values of the phase $\phi$, obtaining probability histograms. Then, the Wigner function may be obtained from the reconstructed marginals with mathematical post-processing, e.g.\ with an inverse Radon transform.
This quantum state tomography requires many measurement settings, one for each value of $\phi$, and to solve a complex inverse problem \cite{lvovsky2009continuous}. In practice, a discrete set of angles is chosen, introducing errors in the reconstruction procedure.

Double homodyne detection, sometimes called heterodyne or eight-port homodyne detection~\cite{ferraro2005gaussian}, consists in measuring both quadratures of a single-mode state $\rho$. This is achieved by splitting the state on a beam splitter and performing two homodyne detections, one on each output with a $\pi/2$ phase shift between them (Fig.~\ref{fig:2homodyne}(b)). Double homodyne detection yields a complex outcome, whose real and imaginary parts are respectively both outcomes of the measurements, and projects the measured state onto a coherent state. Hence, it leads to the sampling of the  Husimi $Q$ function of the state, defined as
\be\label{eq:Q}
Q_\rho(\alpha)=\frac1\pi\braket{\alpha|\rho|\alpha},
\ee
for all $\alpha\in\mathbb C$, where $\ket\alpha$ is the coherent state of amplitude $\alpha$.
Note that when the first beam splitter is unbalanced, the double homodyne detection is also unbalanced and projects the measured state onto a finitely squeezed coherent state instead of a coherent state, whose squeezing parameter depends on the unbalancing. In the limit of infinite squeezing, i.e., with a reflectivity of $0$ or $1$ for the input beamsplitter, one retrieves the homodyne detection setup.

Provided many copies of a state are available, optical double homodyne detection allows one to perform continuous variable quantum state tomography with a single measurement setting, since varying the phase of the local oscillator is no longer needed. This is because the Husimi function is directly sampled with double homodyne detection and it contains all the information about the state, whereas for homodyne detection the Wigner function contains all the information about the state but it cannot be sampled directly---it is not a probability density in general---and its marginals are sampled instead.

While being slightly more complicated experimentally than homodyne detection, double homodyne detection has the advantage of providing a reliable tomographic recovery with a single measurement setting~\cite{chabaud2019building}. The POVM element for unbalanced double homodyne detection with outcome $\alpha\in\mathbb C$ is given by
\be
\Pi_{\alpha}=\frac1\pi\ket{\alpha,\zeta}\!\bra{\alpha,\zeta},
\ee
where $\ket{\alpha,\zeta}=\hat S(\zeta)\hat D(\alpha)\ket0$ is a squeezed coherent state, with $\hat D(\alpha)=e^{\alpha\hat a^\dag-\alpha^*\hat a}$ and $\hat S(\zeta)=e^{\frac12(\zeta\hat a^2-\zeta^*\hat a^{\dag2})}$, whose squeezing parameter $\zeta$ depends on the unbalanced double homodyne detection setup. For reflectance $R$ and transmitance $T$ of the beam splitter, we have $|\zeta|=|\log(\frac RT)|$ and the phase of $\zeta$ is given by the phase of the local oscillator~\cite{chabaud2017continuous}. 

In particular, unbalanced double homodyne detection is formally equivalent to a squeezing operation followed by a balanced double homodyne detection. Similarly, since the double homodyne POVM elements are projectors onto coherent states, a displacement before a double homodyne detection can be reverted by translating the classical samples by the amplitude of the displacement. This implies that any single-mode Gaussian unitary operation can be reverted at the level of the measurement by playing with the unbalancing of the detection and by translating the classical samples~\cite{Lloyd2012}.

%--------------------------------------------------------------------------------

\subsection{Reliable expectation value estimation with double homodyne detection}
\label{sec:certif}

\noindent A reliable quantum state certification method with double homodyne detection has been introduced in~\cite{chabaud2019building}, and recently enhanced in~\cite{chabaud2020efficient} to allow for the efficient verification of Boson Sampling output states~\cite{Aaronson2013}. Using samples from balanced double homodyne detection of various copies of an experimental (mixed) state $\rho$, this method provides reliable estimates of the expectation value $\Tr(\hat A\rho)$, for any target single-mode operator $\hat A$ with bounded support over the Fock basis.
Choosing $\hat A=\ket l\!\bra k$, one may estimate the density matrix element $\rho_{kl}=\Tr(\ket l\!\bra k\rho)$ in Fock basis and perform a full reconstruction by varying $k$ and $l$. Alternatively, choosing $\hat A=\ket C\!\bra C$, where $\ket C$ is a core state~\cite{menzies2009gaussian,chabaud2020stellar}, i.e., a normalised pure state with bounded support over the Fock basis, one may estimate the fidelity $F(C,\rho)=\Tr(\ket C\!\bra C\rho)$. 

Moreover, since any single-mode Gaussian unitary operation before double homodyne detection can be reverted by unbalancing the detection and translating the classical samples, this also allows for the estimation of expectation values of operators of the form $\hat G\hat A\hat G^\dag$, where $\hat G$ is a Gaussian unitary operation and $\hat A$ is an operator with bounded support over the Fock basis.

This certification method provides analytical confidence intervals for the estimation and makes minimal assumptions on the state preparation: unlike previous existing methods~\cite{paris1996density,paris1996quantum}, it does not assume that the measured state $\rho$ has a bounded support over the Fock basis. This is especially important for the certification of non-Gaussian features, given that such an assumption may induce non-Gaussianity (applying a cutoff in Fock basis to, e.g., a coherent state with nonzero amplitude always makes it non-Gaussian). 
We further assume that identical and identically distributed copies of the measured state $\rho$ are available, but a refined version of the protocol without this assumption is possible \cite{chabaud2019building}.

Let us now describe how to estimate the expectation value of a target operator $\hat A=\sum_{k,l}{A_{kl}\ket k\!\bra l}$, with bounded support over the Fock basis, given a confidence parameter $\delta>0$~\cite{chabaud2019building,chabaud2020efficient}. Considering $N$ copies of the unknown (mixed) quantum state $\rho$, $N$ double homodyne detection are performed leading to the measurement results $\alpha_1,\dots,\alpha_N\in\mathbb C$. One then compute the estimate $F_{\hat A}(\alpha_1,\dots,\alpha_N)$, which is defined as the mean over these measurements of an expectation value estimator whose expression is detailed in Appendix~\ref{app:protocol}. The estimate $F_{\hat A}(\alpha_1,\dots,\alpha_N)$ obtained is $\epsilon$-close to the expectation value $\Tr(\hat A\rho)$ with probability greater than $1-\delta$, for a number of samples $N=\mathcal O(\frac1{\epsilon^{2+t}}\log{\frac1\delta})$, where the prefactor depends on a free parameter $t>0$. This protocol features several free parameters which may be optimised, and we refer to Appendix~\ref{app:protocol} for a detailed analysis of the protocol and its optimisation.

In what follows, we consider applications of this protocol beyond quantum state tomography, for probing non-Gaussian properties of continuous variable quantum states:

\begin{itemize}
\item
Ranking non-Gaussian states (section~\ref{sec:stellar}): due to the robustness of the stellar hierarchy of quantum states~\cite{chabaud2020stellar}, fidelity estimators with non-Gaussian states provide witnesses for the stellar rank. The expectation value estimation protocol yields such fidelity estimators with reliable confidence intervals. We generalise various results of~\cite{chabaud2020stellar} for the robustness of the stellar hierarchy, and illustrate our results with numerical simulations.
\item
Witnessing Wigner negativity (section~\ref{sec:WignerNeg}): since the Wigner function is related to expectation values of displaced parity operators, we obtain witnesses for the negativity of the Wigner function using the double homodyne expectation value estimation protocol.
\end{itemize}

%--------------------------------------------------------------------------------

\section{Ranking non-Gaussian states}
\label{sec:stellar}

\noindent A pure state is non-Gaussian if and only if it is orthogonal to a coherent state, which equivalently means that its Husimi $Q$ function (\ref{eq:Q}) has zeros~\cite{lutkenhaus1995nonclassical}. We denote by $\mathcal H_\infty$ the single-mode infinite-dimensional Hilbert space. The stellar rank $r^\star(\psi)$ of a pure single-mode normalised quantum state $\ket\psi\in\Hi_\infty$ has been defined in~\cite{chabaud2020stellar} as the number of zeros of its Husimi function (counted with half multiplicity), where it was shown that it corresponds to the minimal number of photon additions needed to engineer the state from the vacuum, together with Gaussian unitary operations (Fig.~\ref{fig:hierarchy}). We write $\overline{\mathbb N}=\mathbb N\cup\{+\infty\}$, with the convention $n<+\infty\Leftrightarrow n\in\mathbb N$. States of zero stellar rank thus are pure Gaussian states and the stellar rank is either finite or infinite. States having infinite stellar rank include cat states and GKP states (for the latter, the Husimi function has an infinite number of zeros since it is quasiperiodic and GKP states are non-Gaussian).

For a mixed state $\rho$, this definition is generalised to 
\be
r^\star(\rho)=\inf_{\{p_i,\psi_i\}}{\sup_i{\,r^\star(\psi_i)}},
\label{rankmixed}
\ee
where the infimum is taken over the statistical ensembles $\{p_i,\psi_i\}$ such that $\rho=\sum_i{p_i\ket{\psi_i}\bra{\psi_i}}$. In particular, a mixed state has a nonzero stellar rank if it cannot be expressed as a mixture of pure Gaussian states, i.e., when it is quantum non-Gaussian \cite{genoni2013detecting}.

\subsection{Stellar hierarchy of non-Gaussian states}

\begin{figure}
	\begin{center}
		\includegraphics[width=\columnwidth]{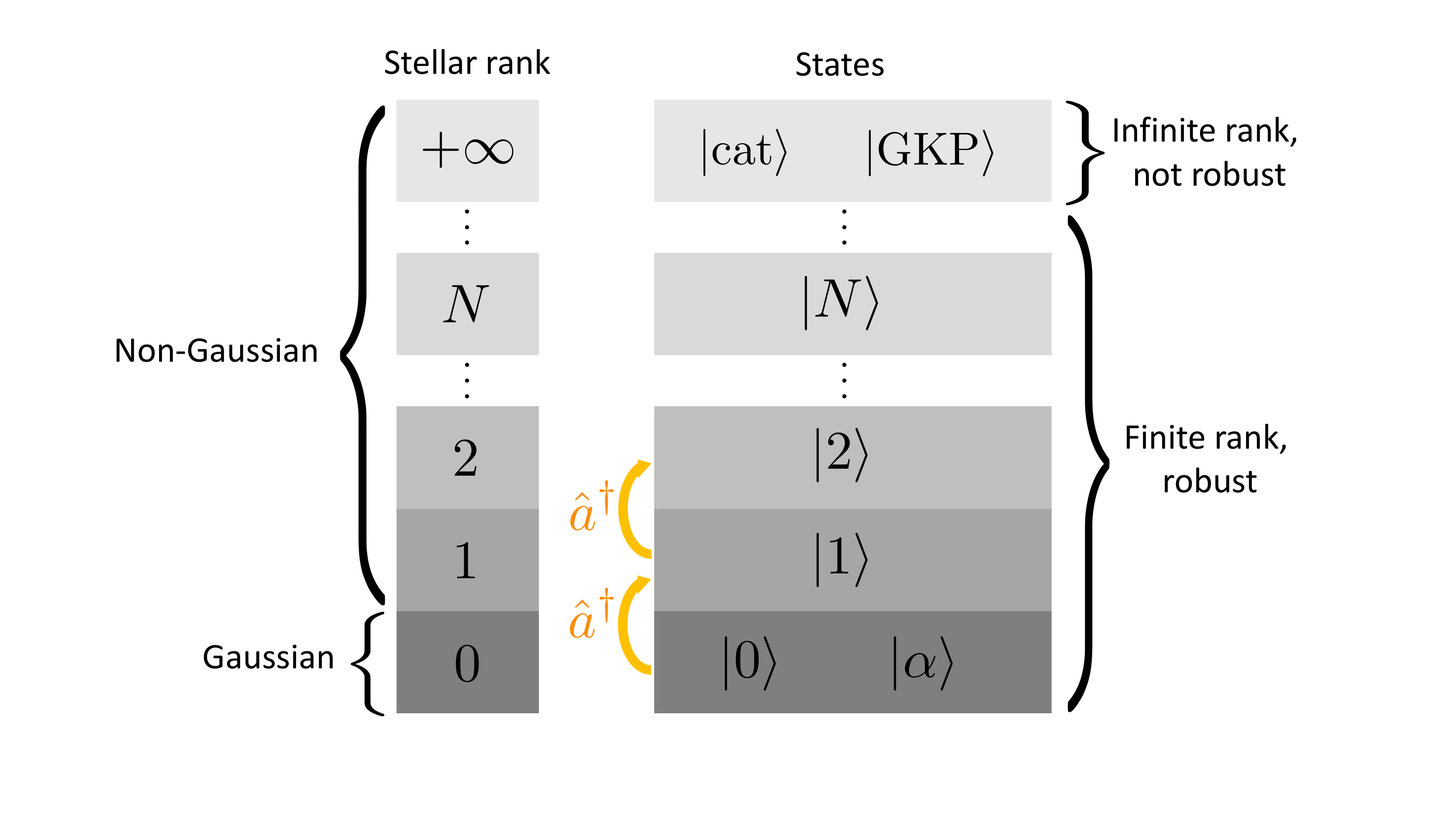}
		\caption{A pictorial depiction of the stellar hierarchy. The right column gives examples of pure states of each stellar rank, which corresponds to the minimal number of photon additions $\hat a^\dag$ necessary to obtain the state from the vacuum. All states of finite stellar rank are robust, i.e., they only have states of equal or higher stellar rank in their close vicinity.}
		\label{fig:hierarchy}
	\end{center}
\end{figure}

\noindent The stellar rank induces a hierarchy among normalised quantum states in $\mathcal H_\infty$, the so-called stellar hierarchy. 

Pure states of finite stellar rank in the hierarchy may be obtained from a core state, i.e., a state with finite support over the Fock basis, with a Gaussian unitary operation~\cite{chabaud2020stellar}. Since single-mode Gaussian unitary operations can be decomposed as a squeezing and a displacement~\cite{Lloyd2012}, pure states of stellar rank $n\in\mathbb N$ are of the form
\be
\hat S(\xi)\hat D(\alpha)\sum_{m=0}^n{c_m\ket m},
\ee
where $\xi,\alpha\in\mathbb C$, $c_n\neq0$ and $\sum_{m=0}^n{|c_m|^2=1}$.
Equivalently, these states may be obtained from the vacuum using $n$ single-photon additions together with Gaussian unitary operations. We show in Appendix~\ref{app:subtraction} that a single-photon subtraction may only increase the stellar rank by at most $1$. This implies that such a state cannot be engineered with less than $n$ single-photon additions and/or subtractions, together with Gaussian unitary operations.

Hereafter, we show how to certify the nonzero stellar ranks of an experimental (mixed) quantum state using fidelity estimation with a target pure state. To that end, we derive robustness properties of the stellar hierarchy and introduce fidelity-based witnesses for the stellar rank.

The stellar robustness $R^\star(\psi)$ of a state $\ket\psi\in\mathcal H_\infty$ corresponds to the amount of deviation in trace distance from $\ket\psi$ which is necessary to reach states of lower stellar rank~\cite{chabaud2020stellar}. In order to probe the stellar hierarchy with phase space measurements, we introduce the following definition, which generalises the notion of stellar robustness:

\begin{defi}[$k$-robustness]
Let $\ket\psi\in\Hi_\infty$. For all $k\in\overline{\mathbb N}^*$, the \textit{$k$-robustness} of the state $\ket\psi$ is defined as
\be
R^\star_k(\psi)\stackrel{\mathrm{def}}{=} \smashoperator{\inf_{r^\star(\phi)<k}} {D(\phi,\psi)},
\ee
where $D(\phi,\psi)=\sqrt{1-|\braket{\phi|\psi}|^2}$ denotes the trace distance and where the infimum is over all states $\ket\phi\in\Hi_\infty$ with a stellar rank lower than $k$. %such that $r^\star(\phi)<k$. 
\end{defi}

\noindent For all $k\in\mathbb N^*$, the $k$-robustness $R^\star_k$ quantifies how much one has to deviate from a quantum state in trace distance to find another quantum state which has a stellar rank between $0$ and $k-1$. When $k=+\infty$, the $\infty$-robustness quantifies how much one has to deviate from a quantum state in trace distance to find another quantum state of finite stellar rank. Note that for $k=r^\star(\psi)$ we recover the stellar robustness.

The stellar robustness $R^\star$ of a state $\ket\psi$ satisfies~\cite{chabaud2020stellar}:
\be
\smashoperator{\sup_{r^\star(\rho)<r^\star(\psi)}} {F(\rho,\psi)}=1-[R^\star(\psi)]^2,
\ee
where $F(\rho,\psi)=\braket{\psi|\rho|\psi}$ is the fidelity. Certifying that an experimental (mixed) state $\rho$ has a fidelity greater than $1-[R^\star(\psi)]^2$ with a given target pure state $\ket\psi$ thus ensures that the state $\rho$ has stellar rank equal or greater than $r^\star(\psi)$. A similar property holds for the $k$-robustness:

%\begin{lem}\label{lem:Rfide}
%Let $\ket\psi\in\mathcal H_\infty$. For all $k\in\overline{\mathbb N}^*$,
%
\begin{equation}\label{lem:Rfide}
\smashoperator{\sup_{r^\star(\rho)<k}} F(\rho,\psi)=1-[R_k^\star(\psi)]^2.
\end{equation}
%
%where $F$ is the fidelity.
%\end{lem}

\noindent We refer to Appendix~\ref{app:Rfide} for a short proof. Certifying that an experimental (mixed) state $\rho$ has a fidelity greater than $1-[R^\star_k(\psi)]^2$ with a given target pure state $\ket\psi$ thus ensures that the state $\rho$ has stellar rank greater or equal to $k$. This implies  that such a state cannot be engineered with less than $k$ single-photon additions and/or subtractions, together with Gaussian unitary operations. In particular, the fidelity of an experimental state with any non-Gaussian target state $\ket\psi$ such that $R_k^\star(\psi)>0$ can serve as a witness for a stellar rank $k$, i.e., as a certificate that the experimental state has a stellar rank no less that $k$. 

Given a target state $\ket\psi$ of stellar rank $r^\star(\psi)$, let us detail when $R_k^\star(\psi)>0$. The robustness of the stellar hierarchy \cite{chabaud2020stellar} implies that each state of a given finite stellar rank is isolated from all the lower stellar ranks. On the other hand, no state of a given finite stellar rank is isolated from any equal or higher stellar rank. That is, for any finite-rank state $\ket\psi\in\mathcal H_\infty$, $R_k^\star(\psi)>0$ when $k\le r^\star(\psi)$, whereas $R_k^\star(\psi)=0$ for $k>r^\star(\psi)$.
Furthermore, states of infinite stellar rank are not isolated from lower stellar ranks, unlike states of finite stellar rank. Any state of infinite can therefore be approximated arbitrarily well by finite-rank states. However, these approximations will always come with a finite error, since states of infinite stellar rank are isolated from states of bounded stellar rank. That is, for states $\ket\psi$ of infinite rank, we find that $R_k^\star(\psi)>0$ for all finite values of $k$, and $R_k^\star(\psi)=0$ for $k=\infty$.

Hence, with Eq.~(\ref{lem:Rfide}) the fidelity with any target pure state of a given stellar rank may serve as a witness for any lower stellar rank. This raises the question of how to compute the value of the maximum achievable fidelity with a given target state using states of finite stellar rank less than $k$ (or equivalently its $k$-robustness), for $k\in\mathbb N^*$.
In~\cite{chabaud2020stellar}, it was shown how to obtain such quantities with a numerical optimisation. However, the number of parameters in the optimisation had to grow with $k$, making the optimisation rapidly impractical. 
Hereafter, we show that the maximum achievable fidelities can be obtained with an optimisation over two complex parameters only. %(see section~\ref{sec:cat}). 

%\begin{itemize}
%\item
%The target state $\ket\psi$ is robust with respect to states of stellar rank less than $k$, i.e., $R^\star_k(\psi)>0$.
%\item
%The value of the robustness $R^\star_k(\psi)$ is known.
%\end{itemize}

%\noindent The first condition follows directly from the robustness of states in the stellar hierarchy \cite{chabaud2020stellar}. 

Let us define, for all $n\in\mathbb N$,
\be
\Pi_n=\sum_{m=0}^n{\ket m\!\bra m},
\ee
the projector onto the subspace of $\mathcal H_\infty$ spanned by the Fock states $\ket0,\dots,\ket n$.

\begin{theo} \label{th:robust}
Let $k\in\mathbb N^*$ and let $\ket\psi\in\Hi_\infty$. Then, the maximum achievable fidelity with the target state $\ket\psi$ using states of finite stellar rank less than $k$ is given by
\be
\smashoperator{\sup_{r^\star(\rho)<k}}F(\rho,\psi)=\sup_{\hat G\in\mathcal G}{\Tr\left[\Pi_{k-1}\hat G\ket\psi\!\bra\psi\hat G^\dag\right]},
\ee
where the supremum is over Gaussian unitary operations. Moreover, assuming the optimisation yields a Gaussian operation $\hat G_0$, an optimal approximating state is
\be
\hat G_0^\dag\left(\frac{\Pi_{k-1}\hat G_0\ket\psi}{\left\|\Pi_{k-1}\hat G_0\ket\psi\right\|}\right)\!\!.
\ee
\end{theo}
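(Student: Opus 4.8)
The plan is to reduce the statement to a two-stage optimisation: an inner optimisation over a core state that is solvable in closed form, and an outer optimisation over Gaussian unitaries that is exactly the right-hand side. First I would pass from mixed to pure states. The fidelity $F(\rho,\psi)=\braket{\psi|\rho|\psi}$ is linear in $\rho$, and by~(\ref{rankmixed}) every $\rho$ with $r^\star(\rho)<k$ admits an ensemble of pure states each of rank $<k$ (the infimum over ensembles of the $\overline{\mathbb N}$-valued quantity $\sup_i r^\star(\psi_i)$ is attained, as $\overline{\mathbb N}$ is well ordered below $+\infty$), so the maximum is reached at a pure extreme point. Equivalently, by~(\ref{lem:Rfide}) and the definition of the $k$-robustness, since $D(\phi,\psi)^2=1-|\braket{\phi|\psi}|^2$ one has $1-[R_k^\star(\psi)]^2=\sup_{r^\star(\phi)<k}|\braket{\phi|\psi}|^2$ over \emph{pure} $\ket\phi$. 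Hence the left-hand side equals $\sup F(\phi,\psi)$ over pure $\ket\phi$ of stellar rank below $k$.

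Next I would parametrise this set. By the structure of finite-rank states recalled above, a pure $\ket\phi$ has $r^\star(\phi)<k$ if and only if $\ket\phi=\hat G\ket C$ for some $\hat G\in\mathcal G$ and some normalised core state with $\Pi_{k-1}\ket C=\ket C$: the forward inclusion is the decomposition $\hat S(\xi)\hat D(\alpha)\sum_{m=0}^{n}c_m\ket m$, and the reverse holds because the stellar function of such a $\ket C$ is a polynomial of degree $<k$ and Gaussian unitaries preserve the stellar rank. Writing a general single-mode Gaussian unitary as a phase times $\hat S(\xi)\hat D(\alpha)\hat R(\theta)$, the rotation $\hat R(\theta)$ and the phase can be absorbed into $\ket C$ (a rotation sends core states to core states of equal rank), which is why the Gaussian freedom collapses to the two complex parameters $(\xi,\alpha)$.

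Then I would solve the inner optimisation over $\ket C$ at fixed $\hat G$. With $\ket{\psi_G}=\hat G^\dag\ket\psi$ and $\ket C=\Pi_{k-1}\ket C$ one has $\braket{\psi|\hat G|C}=\braket{\Pi_{k-1}\psi_G|C}$, so Cauchy--Schwarz gives $\sup_{\ket C}F(\phi,\psi)=\|\Pi_{k-1}\hat G^\dag\ket\psi\|^2=\Tr[\Pi_{k-1}\hat G^\dag\ket\psi\!\bra\psi\hat G]$, with equality at $\ket C\propto\Pi_{k-1}\ket{\psi_G}$ whenever this vector is nonzero. Taking the supremum over $\hat G$ and relabelling $\hat G\mapsto\hat G^\dag$, which is a bijection of the group $\mathcal G$, yields exactly $\sup_{\hat G}\Tr[\Pi_{k-1}\hat G\ket\psi\!\bra\psi\hat G^\dag]$; and if this outer supremum is attained at $\hat G_0$, tracing the equality condition back through the relabelling produces the optimal core state $\Pi_{k-1}\hat G_0\ket\psi/\|\Pi_{k-1}\hat G_0\ket\psi\|$ and hence the stated optimal approximant $\hat G_0^\dag(\Pi_{k-1}\hat G_0\ket\psi/\|\Pi_{k-1}\hat G_0\ket\psi\|)$.

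I expect the main obstacle to be conceptual bookkeeping rather than computation: the Cauchy--Schwarz step is routine, but one must carefully invoke the structural characterisation of rank-$<k$ pure states, justify the absorption of the rotation and phase that reduces the Gaussian optimisation to two complex parameters, and verify that the equality condition is genuinely attainable (i.e.\ $\Pi_{k-1}\hat G_0\ket\psi\neq0$) in order to legitimately read off the optimal state.
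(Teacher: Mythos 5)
Your proposal is correct and follows essentially the same route as the paper's proof in Appendix~\ref{app:robust}: reduction to pure states by linearity of the fidelity, the decomposition $\ket\phi=\hat G\ket C$ of rank-$<k$ pure states into a Gaussian unitary acting on a core state supported on $\Pi_{k-1}$, a Cauchy--Schwarz bound saturated by $\ket C\propto\Pi_{k-1}\hat G^\dag\ket\psi$, and the relabelling $\hat G\mapsto\hat G^\dag$. The extra care you take over attainment of the infimum in the mixed-state stellar rank and over $\Pi_{k-1}\hat G_0\ket\psi\neq0$ addresses points the paper leaves implicit but does not change the argument.
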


\noindent We refer to Appendix~\ref{app:robust} for a detailed proof. Since single-mode Gaussian unitary operations can be decomposed as a squeezing and a displacement, the optimisation is over two complex parameters only, corresponding to this squeezing and this displacement.

\begin{figure}
	\begin{center}
		\includegraphics[width=\columnwidth]{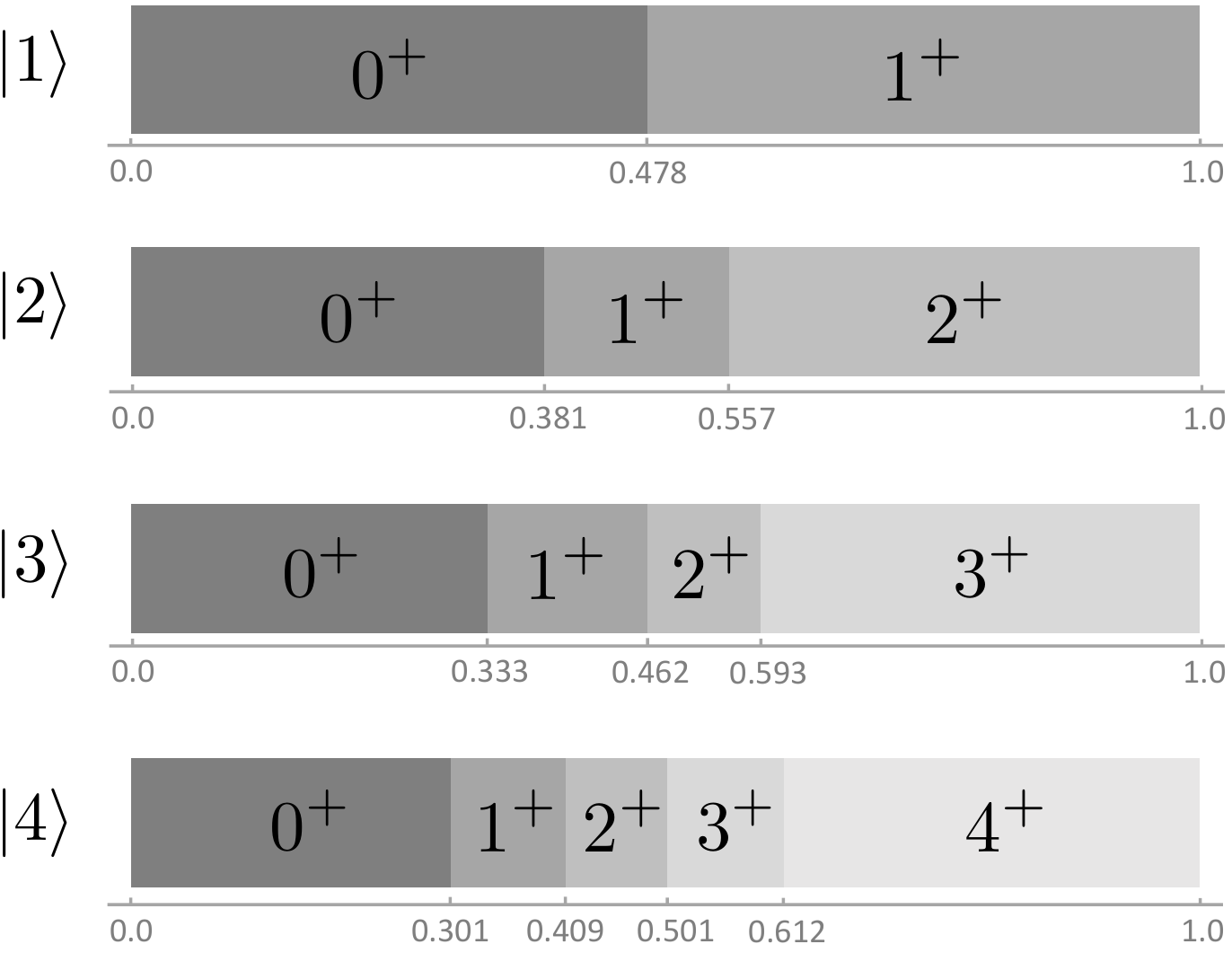}
		\caption{Achievable fidelities with target Fock states $\ket1$, $\ket2$, $\ket3$ and $\ket4$, using states of finite stellar rank. In each fidelity region, the number indicates the minimal stellar rank of states achieving these fidelities, i.e., it corresponds to the minimal number of photon additions and/or subtractions needed to build these states, together with Gaussian operations.}
		\label{fig:RFock}
	\end{center}
\end{figure}

In order to obtain an intuitive graphical representation we introduce the following definition:

\begin{defi}
The profile of achievable fidelities with a non-Gaussian target pure state is defined as the set of achievable fidelities with this state using states of finite stellar rank $k$, for each $k\in\mathbb N^*$.
\end{defi}

\noindent This definition corresponds to the task of engineering an approximation of a target pure state using a finite number of photon additions and/or subtractions, together with Gaussian unitary operations. For each stellar rank, if $\sigma$ denotes a state for which the maximum fidelity is achieved, then any lower fidelity may also be obtained by considering the states $\rho_p=p\ket{\perp}\bra{\perp}+(1-p)\sigma$, for $0\le p<1$, where $\ket{\perp}$ is a coherent state orthogonal to the target state (which exists when the target state is non-Gaussian, i.e., its Husimi function has zeros). In the following, we derive several profiles of achievable fidelities for specific target non-Gaussian states, using Theorem~\ref{th:robust} (see Fig.~\ref{fig:RFock}).

Note that applying a Gaussian unitary operation to the target state yields a state which has the same profile of achievable fidelities, since the $k$-robustness is invariant under these operations. Hence, one may optimise the choice of target state over all Gaussian unitary operations in order to maximise the fidelity of the experimental state with the target state, before making use of the profile of achievable fidelities. In particular, photon-subtracted squeezed states and photon-added squeezed states have the same profile of achievable fidelities as the single-photon Fock state, since they are Gaussian-convertible to that state, i.e., they are related to that state by Gaussian unitary operations~\cite{chabaud2020stellar}.

%--------------------------------------------------------------------------------

\subsection{Fock states}

\noindent Let us now consider the example of target core states and Fock states in particular. Using Theorem~\ref{th:robust}, we have obtained numerically the maximum achievable fidelities with a target core state $\cos\phi\ket0+e^{i\chi}\sin\phi\ket1$ of stellar rank $1$ using states of stellar rank $0$, i.e., Gaussian states (see Appendix~\ref{app:core01}). We find that the most robust state of stellar rank $1$ is the single-photon Fock state $\ket1$ (and all the states of the form $\hat G\ket1$, where $\hat G$ is a Gaussian unitary operation).

\begin{figure}
	\begin{center}
		\includegraphics[width=\columnwidth]{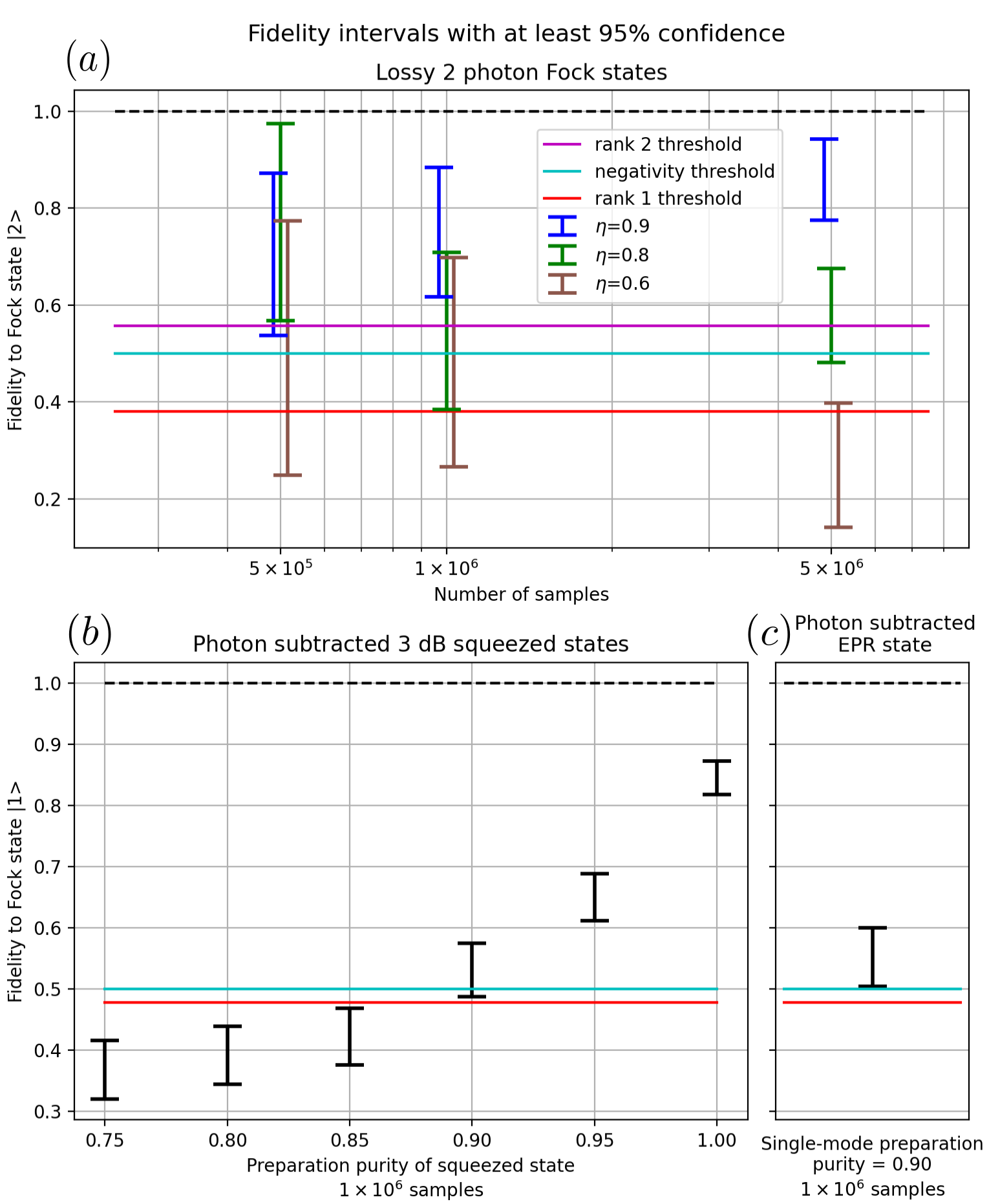}
		\caption{Estimation of fidelity with target Fock states at 95\% confidence, using simulated samples from double homodyne detection of various states. The precision of the estimate varies depending on the number of samples used. We display the stellar rank 1 (resp.\ 2) threshold in red (resp.\ purple) and the Wigner negativity witness threshold (see section \ref{sec:WignerNeg}) in cyan. Note that the rank $1$ threshold varies depending on whether the target state is the Fock state $\ket1$ or $\ket2$, as can be checked with Fig.~\ref{fig:RFock}. \textbf{(a)} Lossy 2 photon Fock states $(1-\eta)^2|0\rangle\langle0|+2\eta(1-\eta)|1\rangle\langle1|+\eta^2|2\rangle\langle2|$, where $\eta$ is an efficiency parameter, and target Fock state $\ket2$. For $\eta=0.9$, $0.8$ and $0.6$, the true fidelities to $|2\rangle$ are $0.81$, $0.64$ and $0.36$, respectively. The Wigner negativity threshold is obtained numerically. \textbf{(b)} Photon-subtracted squeezed vacuum states with $3$dB of squeezing and target Fock state $\ket1$, for several values of the preparation purity, i.e., the purity of the squeezed vacuum state before the photon subtraction. \textbf{(c)} A photon subtracted two-mode EPR state with experimental covariance matrix taken from~\cite{ra2020non} and target Fock state $\ket1$. Simulated photon subtraction is done on the first mode and double homodyne measurement on the second mode of the EPR state.}
		\label{fig:simu_fid}
	\end{center}
\end{figure}

The maximum achievable fidelity with a single-photon Fock state $\ket1$ using states of stellar rank $0$, i.e., Gaussian states, is given by $3\sqrt3/(4e)\approx0.478$~\cite{chabaud2020stellar}. Using Theorem~\ref{th:robust}, we have obtained numerically the profiles of achievable fidelities for Fock states $\ket2$, $\ket3$, $\ket4$ (see Appendix~\ref{app:RFock}), depicted in Fig.~\ref{fig:RFock}. 

These profiles indicate the stellar rank necessary to achieve any fidelity with these Fock states. In particular, the fidelity between an experimental state and a target Fock state can serve as a witness for the stellar rank of the experimental state. For example, we see with Fig.~\ref{fig:RFock} that no state of stellar rank less or equal to $1$ can achieve a fidelity greater than $0.5$ with the Fock state $\ket3$. Hence, if the fidelity of an experimental state with the Fock state $\ket3$ is greater than $0.5$, then it has a stellar rank greater or equal to $2$, implying that it could not have been produced by less than two photon additions or subtractions, together with Gaussian unitary operations.

In order to illustrate the practicality of these fidelity-based witnesses for the stellar rank, we have simulated the double homodyne expectation value estimation from section~\ref{sec:certif} for various states (see Fig.~\ref{fig:simu_fid}). To do so, we computed their $Q$ function and performed rejection sampling to simulate the detection. Then, we estimated their fidelity with target Fock states at 95\% confidence using the simulated samples, thus obtaining witnesses for their stellar rank. For further details on the simulation procedure, see Appendix~\ref{app:protocol}. When the lower bound of the confidence interval is greater than a given threshold, one can certify at 95\% confidence the corresponding property, e.g., a stellar rank greater or equal to~$1$~or~$2$. Note that for Fig.~\ref{fig:simu_fid} \textbf{(c)}, the $Q$ function of the photon subtracted two-mode EPR state is computed from an experimental covariance matrix acquired in~\cite{ra2020non}, where actual measurements of such state display Wigner negativity. Additionally, we have also represented a threshold for the certification of Wigner negativity using a similar witness, as detailed in the next section. 
%Using the expectation value estimation from section~\ref{sec:certif} with the target operator $\hat A=\ket n\!\bra n$ for $n=1,2$, we have simulated double homodyne detection of various states by performing rejection sampling of their Q-function, and estimated their fidelity with target Fock states at 95\% confidence using the simulated samples, thus obtaining witnesses for their stellar rank (Fig.~\ref{fig:simu_fid}). Specifically, for regimes where the lower bound of the confidence interval is greater than a given threshold, one can certify at 95\% confidence the matching property, e.g. a stellar rank greater or equal to~$1$~or~$2$. Additionally, we have also represented a threshold for the certification of Wigner negativity using a similar witness, as detailed in the next section.

%--------------------------------------------------------------------------------

\section{Wigner negativity}
\label{sec:WignerNeg}

\noindent In the previous sections, we have shown how one can certify the nonzero stellar ranks of experimental (mixed) states, showing in particular that they are non-Gaussian and witnessing their stellar rank. However, such mixed states may still have positive Wigner function. Since processes with positive Wigner functions are classically simulable~\cite{mari2012}, Wigner negativity is a crucial property to look for beyond non-Gaussianity. We show how the double homodyne expectation value estimation protocol from section~\ref{sec:certif} allows for witnessing Wigner negativity in a flexible way without the need for a full tomography. 

The Wigner function of a state $\rho$ evaluated at $\alpha\in\mathbb C$ is related to the expectation value of the parity operator displaced by $\alpha$~\cite{royer1977wigner}:
\be
W_\rho(\alpha)=\frac2\pi\Tr\left[\hat D(\alpha)\hat\Pi_{\mathrm P}\hat D^\dag(\alpha)\rho\right],
\ee
where $\hat\Pi_{\mathrm P}=\sum_{k\ge0}{(-1)^k\ket k\!\bra k}$ is the parity operator. Defining, for all $n\in\mathbb N^*$ and all $\alpha\in\mathbb C$,
\be
\omega_\rho(\alpha,n)\stackrel{\text{def}}{=}\sum_{k=0}^{n-1}{\braket{2k+1|\hat D^\dag(\alpha)\rho\hat D(\alpha)|2k+1}},
\label{witneg}
\ee
we obtain
\be
W_\rho(\alpha)\le\frac2\pi\left[1-2\,\omega_\rho(\alpha,n)\right],
\ee
and this bound becomes tight in the limit of large $n\in\mathbb N^*$. In particular, if $\omega_\rho(\alpha,n)>\frac12$ for any $n\in\mathbb N^*$ and any $\alpha\in\mathbb C$, then the Wigner function of the state $\rho$ is negative at $\alpha$. The expressions $\omega_\rho(\alpha,n)$ thus provide simple witnesses of Wigner negativity, which go beyond those introduced in~\cite{fiuravsek2013witnessing}, since they are tight for large $n$ and provide more flexibility with the choice of $\alpha$. Moreover, we show that they are experimentally accessible in a reliable fashion using double homodyne detection.

Since a displacement before a double homodyne detection can be reverted in post-processing by translating the samples by the amplitude of that displacement (see section~\ref{sec:phasespacemeasurements}), we can estimate the witness $\omega_\rho(\alpha,n)$ for a fixed value of $n$ and $\alpha$, with a precision $\epsilon>0$ and a confidence $\delta>0$, using the expectation value estimation protocol from section~\ref{sec:certif} for the target operator
\be
\hat A_n\stackrel{\text{def}}{=}\sum_{k=0}^{n-1}{\ket{2k+1}\!\bra{2k+1}},
\ee
which has bounded support over the Fock basis, and translating the samples by $\alpha$. If the estimate of the expectation value $\omega_\rho(\alpha,n)=\Tr[\hat D(\alpha)\hat A_n\hat D^\dag(\alpha)\rho]$ is greater than $\frac12+\epsilon$, we may conclude that the Wigner function of the state $\rho$ is negative at $\alpha$, i.e., $W_{\rho}(\alpha)<0$, with probability greater than $1-\frac\delta2$.

\begin{figure}
	\begin{center}
		\includegraphics[width=\columnwidth]{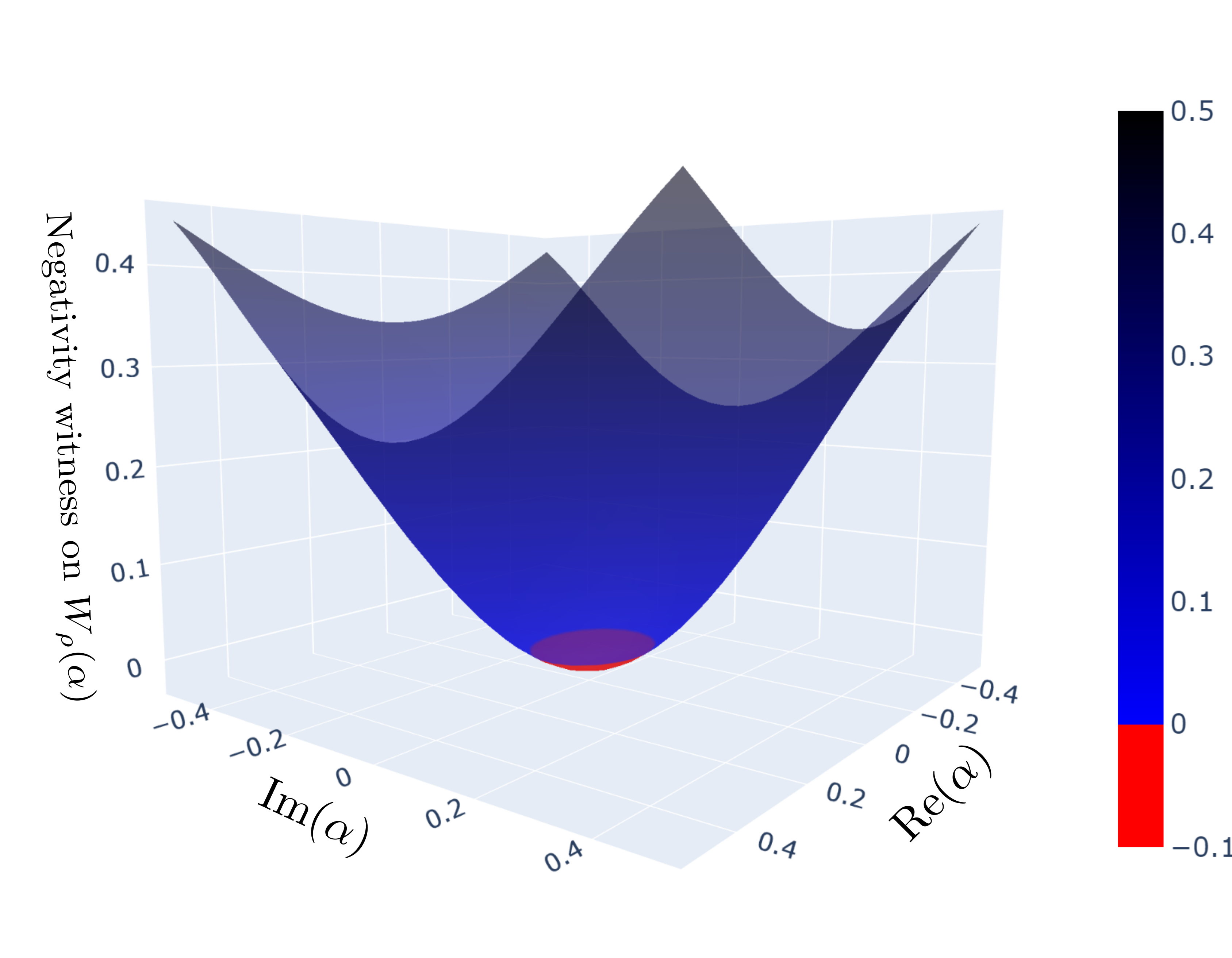}
		\caption{Negativity witnesses for a photon-subtracted squeezed vacuum state with $3$dB of squeezing and $0.95$ preparation purity. Using simulated samples from double homodyne detection, the witnesses in Eq.~(\ref{witneg}) have been estimated for a thousand values of the displacement amplitude $\alpha$, which have been interpolated for clarity, with a fixed precision $\epsilon=0.1$ and a fixed number of samples $N=5.5\times10^5$. The curve has been shifted upwards by the value of $\epsilon$, and the red points witness the negativity of the state with at least $98\%$ confidence.}
		\label{fig:wit_neg}
	\end{center}
\end{figure}

Note that the Wigner negativity witness corresponding to $n=1$ and $\alpha=0$ is simply the fidelity with the single-photon Fock state $\ket1$. If this fidelity is greater than $\frac12$, then the Wigner function is negative at $\alpha=0$ (see Fig.~\ref{fig:simu_fid}, (b) and (c)). 
Using the expectation value estimation from section~\ref{sec:certif} with the target operator $\hat A_1=\ket1\!\bra1$, we have simulated double homodyne sampling of an imperfect photon-subtracted squeezed vacuum state (see Fig.~\ref{fig:wit_neg}), and estimated the corresponding expectation values using the samples translated by $\alpha$, for several values of $\alpha=\alpha_x+i\alpha_y\in\mathbb C$, thus obtaining witnesses of its Wigner negativity.

\medskip
\medskip

%--------------------------------------------------------------------------------

\section{Discussion}

\noindent Double homodyne detection is an operational measurement scheme that allows information on CV quantum states to be retrieved and quantum characteristics to be reliably probed. In this work, we obtained new results on the robustness of the stellar hierarchy of non-Gaussian quantum states and in particular on how to compute this robustness for a given target state with a simple optimisation. This allows us to make quantitative and robust statements about non-Gaussian states by deriving a class of witnesses based on the fidelity with specific non-Gaussian target states. Beyond the stellar rank and its operational characterization, we have also obtained new flexible Wigner negativity witnesses, which can be efficiently estimated from experimental data with analytical confidence intervals.

Our work shows that experimental demonstration of the newly introduced witnesses for stellar rank and Wigner negativity is feasible. The avenue towards such an experiment is by itself a direction of future research, since a detailed analysis of a wide range of experimental constraints is still required. Note, for example, that a dedicated analysis on the influence of imperfect double homodyne detection is an important next step.
%opens up several directions of research, the first being the experimental demonstration of our stellar rank and Wigner negativity witnesses. For this purpose, a detailed analysis of the influence of imperfect homodyne double detection is needed, which will be conducted elsewhere.

On the level of theoretical work, a clear line of further research is to obtain robustness profiles of a wider variety of states, such as GKP states or cat states. These profiles will give a direct quantitative understanding of the difficulty of designing such states, as measured by the required number of photon additions and/or subtractions.

The certification technique on which this work is based has recently been extended to the multimode case~\cite{chabaud2020efficient}, allowing the efficient computation of tight fidelity witnesses for a large class of multimode CV states. In combination with the use of entanglement witnesses, we expect that our methods will provide a reliable means of probing the interaction between entanglement and non-gaussianity. These techniques can thus be expected to contribute to the development of an experimental tool for detecting purely non-Gaussian entanglement \cite{PhysRevLett.119.183601}. We leave this fascinating perspective for future work.

\section{Acknowledgements}

D.M.\ and F.G.\ acknowledge funding from the ANR through the ANR-17-CE24-0035 VanQuTe project. V.P.\ acknowledges financial support from the European Research Council under the Consolidator Grant COQCOoN (Grant No.\ 820079).

%--------------------------------------------------------------------------------

%\bibliographystyle{apsrev}
\bibliography{bibliography}

%--------------------------------------------------------------------------------

\widetext
\newpage

%--------------------------------------------------------------------------------

\appendix

%--------------------------------------------------------------------------------

\section{Double homodyne expectation value estimation protocol}
\label{app:protocol}

\noindent In this section, we first recall the double homodyne expectation value estimation protocol of~\cite{chabaud2019building,chabaud2020efficient} for a target operator $\hat A$ with bounded support over the Fock basis (section~\ref{app:genprot}). This protocol features several free parameters, which we then optimise in the case where the target operator is the density matrix of a Fock state: $\hat A=\ket n\!\bra n$, for $n\in\mathbb N$ (section~\ref{app:optiFock}).

%--------------------------------------------------------------------------------

\subsection{General protocol}
\label{app:genprot}

\noindent Following~\cite{chabaud2019building,chabaud2020efficient}, let us introduce for $k,l\ge0$ the polynomials
\be
\ba
\mathcal{L}_{k,l}(z)&=e^{zz^*}\frac{(-1)^{k+l}}{\sqrt{k!}\sqrt{l!}}\frac{\partial^{k+l}}{\partial z^k\partial z^{*l}}e^{-zz^*}\\
&=\sum_{p=0}^{\min{(k,l)}}{\frac{\sqrt{k!}\sqrt{l!}(-1)^p}{p!(k-p)!(l-p)!}z^{l-p}z^{*k-p}},
\ea
\label{2DL}
\ee
for $z\in\mathbb C$, which are, up to a normalisation, the Laguerre 2D polynomials. For all $k,l\in\mathbb N$, we define with these polynomials the functions
\be
\ba
f_{k,l}(z,\eta):=\frac1{\eta^{1+\frac{k+l}2}} e^{\left(1-\frac{1}{\eta}\right)zz^*}\mathcal{L}_{l,k}\left(\frac z{\sqrt{\eta}}\right),
\label{fkl}
\ea
\ee
for all $z\in\mathbb C$ and all $0<\eta<1$. For all $k,l\in\mathbb N$, the function $z\mapsto f_{k,l}(z,\eta)$, being a polynomial multiplied by a converging Gaussian function, is bounded over $\mathbb C$.
Let us define, for all $p\in\mathbb N^*$, all $z\in\mathbb C$ and all $0<\eta<1$,
\be
g_{k,l}^{(p)}(z,\eta):=\sum_{j=0}^{p-1}{(-1)^jf_{k+j,l+j}(z,\eta)\,\eta^j\sqrt{\binom{k+j}k\binom{l+j}l}}.
\label{appg}
\ee
and let
\be
g_{\hat A}^{(p)}(z,\eta):=\sum_{k,l}{A_{kl}\,g_{k,l}^{(p)}(z,\eta)},
\ee
for any operator $\hat A=\sum_{k,l}{A_{kl}\ket k\!\bra l}$ with bounded support over the Fock basis. Let $N\in\mathbb N$, for all $\alpha_1,\dots,\alpha_N\in\mathbb C$, we also define
\be
F_{\hat A}(\alpha_1,\dots,\alpha_N)=\frac1N\sum_{i=1}^N{g_{\hat A}^{(p)}(\alpha_i,\eta)},
\ee
thus omitting the dependencies in the free parameters $\eta$ and $p$. From~\cite{chabaud2020efficient} we have the following result:

\begin{theo*} 
Let $N\in\mathbb N^*$, let $\rho$ be a single-mode state and let $\alpha_1,\dots,\alpha_N\in\mathbb C$ be samples of double homodyne detection of $N$ copies of $\rho$. Let $\epsilon>0$ and $\delta>0$. Then,
\be
\left|\Tr(\hat A\rho)-F_{\hat A}(\alpha_1,\dots,\alpha_N)\right|\le\epsilon
\ee
with probability greater than $1-\delta$, for a number of samples
\be
N=\mathcal O\left(\frac1{\epsilon^{2+t}}\log\left(\frac1\delta\right)\right),
\ee
where $t>0$ is a free parameter depending on the choice of $\eta$ and $p$.
\end{theo*}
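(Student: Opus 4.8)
The plan is to recognise the estimator $F_{\hat A}=\frac1N\sum_i g_{\hat A}^{(p)}(\alpha_i,\eta)$ as an empirical mean of i.i.d.\ bounded random variables, and to control separately its bias and its statistical fluctuations. Since $\hat A$ has bounded support, linearity reduces everything to analysing the individual estimators $g_{k,l}^{(p)}$: we have $g_{\hat A}^{(p)}=\sum_{k,l}A_{kl}\,g_{k,l}^{(p)}$ and $\Tr(\hat A\rho)=\sum_{k,l}A_{kl}\rho_{lk}$, so it suffices to treat a single matrix element and recombine with the finitely many coefficients $A_{kl}$.

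First I would compute the expectation $\mathbb{E}_{\alpha\sim Q_\rho}[g_{k,l}^{(p)}(\alpha,\eta)]$, using that double homodyne outcomes are distributed according to $Q_\rho$. Expanding $Q_\rho(\alpha)=\frac1\pi e^{-|\alpha|^2}\sum_{m,n}\rho_{mn}\frac{\alpha^{*m}\alpha^n}{\sqrt{m!n!}}$ and invoking the biorthogonality of the Laguerre 2D polynomials $\mathcal L_{k,l}$ against the complex Gaussian weight, the expectation of each $f_{k+j,l+j}(\cdot,\eta)$ becomes a series in the shifted matrix elements $\rho_{l+i,k+i}$ with coefficients that are powers of $\eta$. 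The role of the alternating binomial weights $(-1)^j\eta^j\sqrt{\binom{k+j}k\binom{l+j}l}$ in $g_{k,l}^{(p)}$ is precisely to telescope this series so that the leading correction terms cancel, leaving $\mathbb E[g_{k,l}^{(p)}]=\rho_{lk}+\beta_{k,l}(\eta,p)$, where the residual bias $\beta_{k,l}(\eta,p)$ vanishes as $p\to\infty$ for fixed $\eta<1$ (and also as $\eta\to1$). The key quantitative estimate here is a bound on $|\beta_{k,l}(\eta,p)|$ in terms of $\eta$, $p$ and $\|\rho\|$ (using $|\rho_{mn}|\le1$), guaranteeing that the pair $(\eta,p)$ can be chosen to make $|\beta_{k,l}|\le\epsilon/2$.

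Next I would establish boundedness and apply a concentration inequality. For fixed $\eta<1$, each $f_{k,l}(\cdot,\eta)$ is a polynomial times the decaying Gaussian $e^{(1-1/\eta)|z|^2}$, hence bounded on $\mathbb C$; summing the $p$ terms with their binomial coefficients yields a finite bound $B=B(\hat A,\eta,p)$ on $|g_{\hat A}^{(p)}|$. Since the copies of $\rho$ are i.i.d., the samples $g_{\hat A}^{(p)}(\alpha_i,\eta)$ are i.i.d.\ and bounded by $B$, so Hoeffding's inequality (applied to real and imaginary parts) gives $\Pr[|F_{\hat A}-\mathbb E F_{\hat A}|>\epsilon/2]\le 4\,e^{-cN\epsilon^2/B^2}$ for an absolute constant $c$; requiring this to be at most $\delta$ forces $N=\mathcal O((B^2/\epsilon^2)\log(1/\delta))$. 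Combining with the bias through the triangle inequality, $|\Tr(\hat A\rho)-F_{\hat A}|\le|\beta|+|F_{\hat A}-\mathbb EF_{\hat A}|\le\epsilon$ with probability $>1-\delta$.

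The crux, and the main obstacle, is the bias--variance trade-off hidden in the constant $B$. Driving the bias below $\epsilon/2$ pushes $\eta\to1$ or $p\to\infty$, but $\sup_z|f_{k,l}(z,\eta)|$ diverges as $\eta\to1$ because the factor $e^{(1-1/\eta)|z|^2}$ loses its decay, and the binomial coefficients grow with $p$; one must therefore bound the growth of these Laguerre 2D polynomial suprema and choose $\eta=\eta(\epsilon)$ and $p=p(\epsilon)$ so as to minimise $B(\epsilon)^2/\epsilon^2$. The resulting optimal $B(\epsilon)$ grows slowly but unboundedly as $\epsilon\to0$, which is exactly what converts the naive $\epsilon^{-2}$ into $\epsilon^{-(2+t)}$, with the free exponent $t>0$ parametrising the chosen balance between $\eta$ and $p$. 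Carrying out this optimisation quantitatively—tracking the $\eta\to1$ asymptotics of the estimator bound—is where the real work lies; the concentration step itself is routine once boundedness is in hand.
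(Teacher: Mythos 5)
Your overall architecture is the right one, and it is essentially the one the paper relies on (the theorem is imported from Ref.~\cite{chabaud2020efficient}, and Appendix~\ref{app:optiFock} re-derives the diagonal case $\hat A=\ket n\!\bra n$ explicitly): reduce to finitely many matrix elements, compute $\mathbb E_{\alpha\leftarrow Q_\rho}[g_{k,l}^{(p)}(\alpha,\eta)]$ to isolate $\rho_{lk}$ plus a telescoped bias series, bound the bias, bound the range of the bounded i.i.d.\ estimator, apply Hoeffding, and combine via the triangle inequality. This matches Eq.~(\ref{betterboundEgkl2}) for the bias and Eq.~(\ref{step1gFock}) for the concentration step.

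The genuine gap is in the step you yourself identify as the crux: your bias--variance trade-off runs in the wrong direction in $\eta$. The bias is a tail series $(-1)^{p+1}\sum_{q\ge p}\rho_{n+q,n+q}\,\eta^q\binom{q-1}{p-1}\binom{n+q}{q}$ whose leading term is of order $\eta^p$; it is suppressed by taking $\eta\to0$ (or $p$ large with $\eta$ bounded away from $1$), \emph{not} by taking $\eta\to1$. In the limit $\eta\to1$ the factor $\eta^q$ provides no decay and the bias bound tends to the binomial coefficients, so the route you propose does not make the bias small, and ``tracking the $\eta\to1$ asymptotics'' would not yield the theorem. Correspondingly, the divergence that produces the $\epsilon^{-(2+t)}$ scaling is not the loss of Gaussian decay at $\eta=1$ but the prefactor $\eta^{-(1+\frac{k+l}2)}$ in $f_{k,l}$, which blows up precisely in the bias-suppressing limit $\eta\to0$ (in the paper's notation the Hoeffding range is $R_n^{(p)}/\eta^{n+1}$). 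The correct trade-off is: choose $\eta(\epsilon)\to0$ and/or $p(\epsilon)$ to force the bias below $\epsilon/2$, pay a range $B\sim\eta^{-c}$ in the concentration step, and optimise; this is what yields $N=\mathcal O(\epsilon^{-(2+t)}\log(1/\delta))$. A secondary caveat: your claim that the bias vanishes as $p\to\infty$ for \emph{every} fixed $\eta<1$ also needs care, since the growth of $\binom{q-1}{p-1}\binom{n+q}{q}$ can overwhelm $\eta^q$ unless $\eta$ is small enough --- this is exactly what the auxiliary index $p_n$ of Eq.~(\ref{pn}) is introduced to control.
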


\noindent The expressions appearing in the theorem are generic and may be refined depending on the expression of the operator $\hat A$. Moreover, the choice of the free parameters $\eta$ and $p$ can be optimised in order to minimise the number of samples $N$ needed to achieve a given confidence interval $\epsilon,\delta$. We consider such an optimisation in the following section, in the case where $\hat A=\ket n\!\bra n$ is the density operator of a photon number Fock state $\ket n$, for $n\in\mathbb N$.

%--------------------------------------------------------------------------------

\subsection{Optimised protocol for target Fock states}
\label{app:optiFock}

\noindent In this section, we optimise the certification protocol from the previous section for the case of photon number Fock states.

Let $0<\epsilon<1$ and $n\in\mathbb N$. In this section, we optimise the single-mode fidelity estimation protocol for a target Fock state $\ket n$. 
For $0<\eta<1$ and $p\in\mathbb N^*$, let
\be
p_n:=\min\left\{q\ge p,\text{ s.t. }\eta\le\left(1-\frac{p-1}q\right)\left(1-\frac n{n+q+1}\right)\right\}.
\label{pn}
\ee
We denote by $\underset{\alpha\leftarrow D}{\mathbb E}[f(\alpha)]$ the expected value of a function $f$ for samples drawn from a distribution $D$. By Lemma~4 of~\cite{chabaud2020efficient} (in particular Eq.~(A17) with $q_0=p_n$ and $k=l=n$),
\be
\left|\underset{\alpha\leftarrow Q_\rho}{\mathbb E}[g_{n,n}^{(p)}(\alpha,\eta)]-\rho_{nn}\right|\le\eta^{p_n}\binom{p_n-1}{p-1}\binom{n+p_n}n.
\label{betterboundEgkl2}
\ee
for any state $\rho$, where $g_{n,n}^{(p)}$ is defined in Eq.~(\ref{appg}), for $k=l=n$. Moreover, by Lemma~4 of~\cite{chabaud2020efficient} we have
\be
\underset{\alpha\leftarrow Q_\rho}{\mathbb E}[g_{n,n}^{(p)}(\alpha,\eta)]=\rho_{nn}+(-1)^{p+1}\sum_{q=p}^{+\infty}{\rho_{n+q,n+q}\eta^q\binom{q-1}{p-1}\binom{n+q}q}.
\ee
In particular, for $p$ odd,
\be
\underset{\alpha\leftarrow Q_\rho}{\mathbb E}[g_{n,n}^{(p)}(\alpha,\eta)]\ge\rho_{nn},
\ee
i.e., $g_{n,n}^{(p)}$ overestimates $\rho_{nn}$, while for $p$ even,
\be
\underset{\alpha\leftarrow Q_\rho}{\mathbb E}[g_{n,n}^{(p)}(\alpha,\eta)]\le\rho_{nn},
\ee
i.e., $g_{n,n}^{(p)}$ underestimates $\rho_{nn}$. Hence, setting
\be
h_n^{(p)}(z,\eta):=g_{n,n}^{(p)}(z,\eta)+\frac12(-1)^p\eta^{p_n}\binom{p_n-1}{p-1}\binom{n+p_n}n,
\label{hn}
\ee
for all $z\in\mathbb C$, we obtain with Eq.~(\ref{betterboundEgkl2}), 
\be
\left|\underset{\alpha\leftarrow Q_\rho}{\mathbb E}[h_n^{(p)}(\alpha,\eta)]-\rho_{nn}\right|\le\frac12\eta^{p_n}\binom{p_n-1}{p-1}\binom{n+p_n}n.
\label{betterboundEh}
\ee
The functions $z\mapsto h_n^{(p)}(z,\eta)$ and $z\mapsto g_{n,n}^{(p)}(z,\eta)$ only differ by a constant and thus have the same range. 
Let $N\in\mathbb N^*$ and let $\alpha_1,\dots,\alpha_N\in\mathbb C$ be i.i.d.\ samples from double homodyne detection of a state $\rho$. For $n\in\mathbb N$, let us write
\be
F_n(\alpha_1,\dots,\alpha_N):=\frac1N\sum_{i=1}^N{h_n^{(p)}(\alpha_i,\eta)},
\ee
where $h_n^{(p)}$ is defined in Eq.~(\ref{hn}).
With Hoeffding inequality~\cite{hoeffding1963probability}, we obtain
\be
\Pr\left[\left|F_n(\alpha_1,\dots,\alpha_N)-\underset{\alpha\leftarrow Q_\rho}{\mathbb E}[h_n^{(p)}(\alpha,\eta)]\right|\ge\lambda\right]\le2\exp\left[-\frac{2N\lambda^2\eta^{2n+2}}{\left(R_n^{(p)}\right)^2}\right],
\label{step1gFock}
\ee
for all $p\in\mathbb N^*$ and all $\eta<1$, where $R_n^{(p)}$ is the range of the function
\be
z\mapsto \eta^{n+1}g_n^{(p)}(z,\eta)=e^{-(1-\eta)\frac{|z|^2}\eta}\sum_{j=0}^{p-1}{(-1)^{n+j}\text{L}_{n+j}\left(\frac{|z|^2}\eta\right)\binom{n+j}n},
\ee
which is bounded for $0<\eta<1$.
Combining Eqs.~(\ref{betterboundEh}) and (\ref{step1gFock}) we obtain
\be
\left|\rho_{nn}-F_n(\alpha_1,\dots,\alpha_N)\right|\le\lambda+\frac12\eta^{p_n}\binom{p_n-1}{p-1}\binom{n+p_n}n,
\ee
with probability greater than $1-P^{iid}_n$, where
\be
P^{iid}_n:=2\exp\left[-\frac{2N\lambda^2\eta^{2n+2}}{\left(R_n^{(p)}\right)^2}\right].
\label{Piidnapp}
\ee
Setting
\be
\lambda=\max\left\{\epsilon-\frac12\eta^{p_n}\binom{p_n-1}{p-1}\binom{n+p_n}n,0\right\},
\ee
we finally obtain
\be
\left|\rho_{nn}-F_n(\alpha_1,\dots,\alpha_N)\right|\le\epsilon
\ee
with probability $1-P^{iid}_n$, where
\be
P^{iid}_n=2\exp\left[-\frac{2N\lambda^2\eta^{2n+2}}{\left(R_n^{(p)}\right)^2}\right],
\label{Piidnapp2}
\ee
where
\be
\ba
\,&\lambda=\max\left\{\epsilon-\frac12\eta^{p_n}\binom{p_n-1}{p-1}\binom{n+p_n}n,0\right\}\\
\,&p_n=\min\left\{q\ge p,\text{ s.t. }\eta\le\left(1-\frac{p-1}q\right)\left(1-\frac n{n+q+1}\right)\right\}\\
\,&R_n^{(p)}=\max_z\left[\eta^{n+1}g_n^{(p)}(z,\eta)\right]-\min_z\left[\eta^{n+1}g_n^{(p)}(z,\eta)\right]\\
\,&\eta^{n+1}g_n^{(p)}(z,\eta)=(-1)^ne^{-(1-\eta)\frac{|z|^2}\eta}\sum_{j=0}^{p-1}{\text{L}_{n+j}\left(\frac{|z|^2}\eta\right)\binom{n+j}n}\\
\,&\text{L}_n(x)=\sum_{i=0}^n{\frac{(-1)^i}{i!}\binom nix^i},\\
\ea
\label{constants}
\ee
for all $z\in\mathbb C$, all $x\in\mathbb R$, all $n\in\mathbb N$, all $p\in\mathbb N^*$, all $0<\eta<1$ and all $0<\epsilon<1$. 

At this point, all expressions are analytical, optimised for the case of a target Fock state. The protocol still has two free parameters $p\in\mathbb N^*$ and $0<\eta<1$, once $\epsilon$, $n$ and $N$ are fixed. The procedure for optimising the efficiency of the protocol then consists in minimising the failure probability over the choice of these parameters $p,\eta$: for a target Fock state $\ket n$ with $n\in\mathbb N$ and a precision parameter $0<\epsilon<1$, we compute for increasing values of $p\in\mathbb N^*$, starting from $p=1$, the minimum failure probability $P^{iid}_n$ in Eq.~(\ref{Piidnapp2}) by optimising over the value of $\eta$; then, we pick the value of $p$ which minimises $P^{iid}_n$. 

For example, for 95\% confidence, i.e., $P^{iid}_n\le0.05$, we obtain in Table~\ref{table:eff1},~\ref{table:eff2} and~\ref{table:eff3} the number of samples $N$ necessary for the fidelity estimate precisions $\epsilon=0.1$, $0.2$ and $0.3$, for the target Fock states $\ket{n=0}$, $\ket{n=1}$ and $\ket{n=2}$, together with the values of the free parameters $p$ and $\eta$ and the corresponding value of $p_n$ (which is needed in the definition of the estimate $h_n^{(p)}$).

\begin{table}[h!]
 \begin{tabular}{||c|c|c|c|c||}
 \hline
$\quad n\quad$ & $N$ & $\quad p\quad$ & $\quad\eta\quad$ & $\quad p_n\quad$\\ [0.5ex] 
 \hline\hline
 $0$ & $\;2.7\times10^4\;$ & $3$ & $0.34$ & $4$\\ 
 \hline
 $1$ & $\;5.5\times10^6\;$ & $3$ & $0.26$ & $3$\\
 \hline
 $2$ & $\;1.3\times10^9\;$ & $3$ & $0.21$ & $3$\\
 \hline
\end{tabular}
\caption{Optimised parameters for target Fock states $0$, $1$ and $2$, for precision $\epsilon=0.1$ with 95\% confidence.}
\label{table:eff1}
\end{table}

\begin{table}[h!]
 \begin{tabular}{||c|c|c|c|c||}
 \hline
$\quad n\quad$ & $N$ & $\quad p\quad$ & $\quad\eta\quad$ & $\quad p_n\quad$\\ [0.5ex] 
 \hline\hline
 $0$ & $\;3.6\times10^3\;$ & $2$ & $0.35$ & $2$\\ 
 \hline
 $1$ & $\;5.8\times10^5\;$ & $2$ & $0.26$ & $2$\\
 \hline
 $2$ & $\;1.0\times10^8\;$ & $3$ & $0.25$ & $4$\\
 \hline
\end{tabular}
\caption{Optimised parameters for target Fock states $0$, $1$ and $2$, for a precision $\epsilon=0.2$ with 95\% confidence.}
\label{table:eff2}
\end{table}

\begin{table}[h!]
 \begin{tabular}{||c|c|c|c|c||}
 \hline
$\quad n\quad$ & $N$ & $\quad p\quad$ & $\quad\eta\quad$ & $\quad p_n\quad$\\ [0.5ex] 
 \hline\hline
 $0$ & $\;9.1\times10^2\;$ & $1$ & $0.30$ & $1$\\ 
 \hline
 $1$ & $\;1.2\times10^5\;$ & $2$ & $0.31$ & $2$\\
 \hline
 $2$ & $\;1.6\times10^7\;$ & $2$ & $0.24$ & $2$\\
 \hline
\end{tabular}
\caption{Optimised parameters for target Fock states $0$, $1$ and $2$, for a precision $\epsilon=0.3$ with 95\% confidence.}
\label{table:eff3}
\end{table}

In order to optimise further the efficiency of the protocol, we may replace the use of the Hoeffding inequality by a direct application of the central limit theorem (CLT). Instead of looking at the range of the distribution $h_n$, one can derive the variance of the sampled distribution. Then, one deduces a gaussian confidence interval of the mean estimation, since the central limit theorem ensures that the distribution of the mean converges towards a gaussian distribution. The theorem is valid for distribution whose variance is finite, which is the case for physical states.

Similarly, we obtain Eq.~(\ref{Piidnapp2}) with probability greater than $1-P^{iid}_{n,CLT}$, where
\be
P^{iid}_{n,\text{CLT}} = 1 - \mathrm{erf}\left(\lambda \sqrt{\frac{N}{2 \sigma^2}}\right),
\ee
$\sigma^2$ is the variance of the sampled estimations, and erf is the error function.

Applying the central limit theorem comes at the cost that the obtained bounds are no longer analytical, since we replace the theoretical variance by its value estimated from the samples.

To emphasize the usefulness of the optimisation carried out in this section, we have represented in Fig.~\ref{fig:opti} the difference in efficiency between the non-optimised protocol and its optimised versions, with Hoeffding inequality or CLT, for a target Fock state $\ket2$. 

\begin{figure}[h!]
	\begin{center}
		\includegraphics[width=0.7\columnwidth]{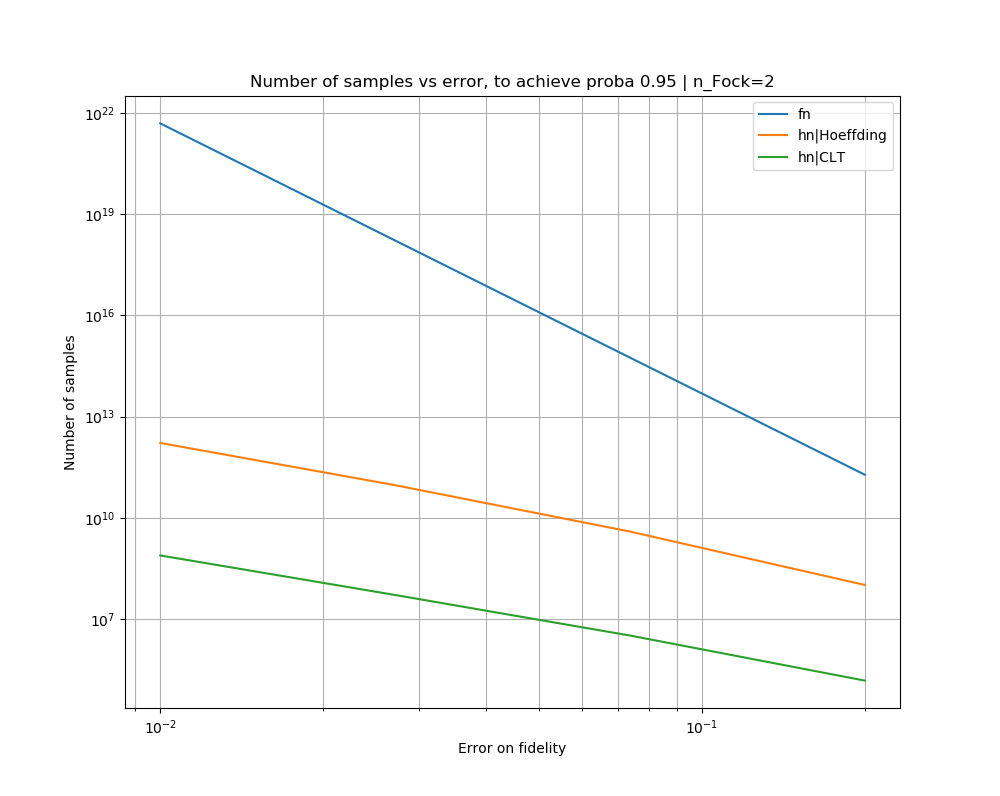}
		\caption{Efficiency of the various estimation methods for a target Fock state $\ket2$. The curves represent the number of samples necessary to obtain a given precision on the fidelity estimate with 95\% confidence. The top curve is obtained using the protocol from~\cite{chabaud2019building}, the middle one using the optimised protocol with analytical bounds, and the bottom one using the improved bounds from the CLT (for which a Fock state $\ket2$ is sampled).}
		\label{fig:opti}
	\end{center}
\end{figure}
%

%--------------------------------------------------------------------------------

\section{Photon subtraction and stellar rank}
\label{app:subtraction}

\noindent The stellar function of a single-mode pure state $\ket\psi$ is defined as~\cite{chabaud2020stellar}:
\be
F_\psi^\star(z)=e^{\frac12|z|^2}\braket{\psi|z},
\ee
where $\ket z=e^{-\frac12|z|^2}\sum{n\ge0}{\frac{z^n}{\sqrt{n!}}\ket n}$ is the coherent state of amplitude $z$, for all $z\in\mathbb C$. This function gives a representation of a quantum state with a holomorphic function which is related to the Husimi $Q$ function by
\be
Q_\psi(z)=\frac{e^{-|z|^2}}\pi\left|F_\psi^\star(z)\right|^2.
\ee
The stellar rank of a state $\ket\psi$ is the number of zeros of its stellar function (counted with multiplicity) and corresponds to the minimal number of elementary non-Gaussian operations---namely, photon-addition--- necessary to engineer the state from the vacuum, together with unitary Gaussian operations.

While operators have their own treatment in the stellar formalism, it is sufficient for our purpose to consider the following correspondences: the creation and annihilation operators have the stellar representations~\cite{chabaud2020stellar}
\be
\hat a^\dag\rightarrow z,\quad \hat a\rightarrow\partial_z,
\label{castellar}
\ee
i.e., the operator corresponding to $\hat a^\dag$ in the stellar representation is the multiplication by $z$ and the operator in the stellar representation corresponding to $\hat a$ is the derivative with respect to $z$. This means that the stellar function of a photon-added state $\hat a^\dag\ket\psi$ is given by $z\mapsto zF_\psi^\star(z)$, while the stellar function of a photon-subtracted state $\hat a\ket\psi$ is given by $z\mapsto\partial_zF_\psi^\star(z)$. In particular, photon-added states are always non-Gaussian~\cite{walschaers2017statistical}, since $0$ is a root of their stellar function, while photon-subtracted states can be Gaussian (e.g., the Fock state $\ket1$, for which $\hat a\ket1=\ket0$, or the coherent states $\ket\alpha$, for $\alpha\in\mathbb C$, for which $\hat a\ket\alpha=\alpha\ket\alpha$).

Moreover, by Theorem~1 of~\cite{chabaud2020stellar}, the stellar function of a state of finite stellar rank is a polynomial multiplied by a Gaussian function $\exp(Sz^2+Dz)$, where $S$ is a squeezing parameter and $D$ a displacement parameter. Hence, derivating this function gives another polynomial multiplied by the same Gaussian term. The new polynomial has a degree increased by $1$ if $S\neq0$, the same degree if $S=0$ and $D\neq0$, and a degree decreased by one if $S=D=0$. Hence, a single-photon subtraction can increase the stellar rank by at most $1$ since the number of zeros of the stellar functions is equal to the degree of this polynomial.

For completeness, note that the photon subtraction may conserve an infinite stellar rank (e.g., for a cat state) but may also make it finite (e.g., for a displaced cat state $\propto\ket0+\ket\alpha$).

%--------------------------------------------------------------------------------

\section{Proof of Lemma~\ref{lem:Rfide}}
\label{app:Rfide}

\noindent For any pure state $\ket\psi\in\Hi_\infty$ and any set of pure states $\mathcal X$, we have
\be
\ba
\sup_{\substack{\rho=\sum{p_i{\ket\phi_i}\bra{\phi_i}}\\ \sum{p_i}=1,\phi_i\in\mathcal X}}{F(\rho,\psi)}&=\sup_{\substack{\rho=\sum{p_i{\ket\phi_i}\bra{\phi_i}}\\ \sum{p_i}=1,\phi_i\in\mathcal X}}{\braket{\psi|\rho|\psi}}\\
&=\sup_{\sum{p_i}=1}\sup_{\phi_i\in\mathcal X}{\sum{p_i|\braket{\phi_i|\psi}|^2}}\\
&=\sup_{\phi\in\mathcal X}{|\braket{\phi|\psi}|^2}\\
&=\sup_{\phi\in\mathcal X}{F(\phi,\psi)}.
\ea
\label{supF}
\ee
Hence, for $\mathcal X$ the set of pure states of stellar rank less than $k$,
\be
\ba
R_k^\star(\psi)&=\inf_{r^\star(\phi)<k}{D(\phi,\psi)}\\
&=\inf_{r^\star(\phi)<k}{\sqrt{1-|\braket{\phi|\psi}|^2}}\\
&=\sqrt{1-\sup_{r^\star(\phi)<k}{F(\phi,\psi)}}\\
&=\sqrt{1-\sup_{r^\star(\rho)<k}{F(\rho,\psi)}},
\ea
\label{robustnessinter}
\ee
where $D$ denotes the trace distance, where we used the definition of the stellar rank for mixed states~(\ref{rankmixed}). We finally obtain
\be
\sup_{r^\star(\rho)<k}{F(\rho,\psi)}=1-[R_k^\star(\psi)]^2.
\ee

\qed

%--------------------------------------------------------------------------------

\section{Proof of Theorem~\ref{th:robust}}
\label{app:robust}

\noindent By Eq.~(\ref{supF}) we have
\be
\sup_{r^\star(\rho)<k}F(\rho,\psi)=\sup_{r^\star(\phi)<k}{|\braket{\phi|\psi}|^2}.
\label{Fkinter}
\ee
By Theorem~4 of~\cite{chabaud2020stellar}, for any pure state $\ket\phi$ such that $r^\star(\phi)<k$, there exist a normalised core state with a finite support over the Fock basis $\ket{C_\phi}$ of stellar rank lower than $k$ and a Gaussian operation $\hat G_\phi$ such that
\be
\ket\phi=\hat G_\phi\ket{C_\phi}.
\label{phiGC}
\ee
We obtain
\be
\ba
|\braket{\phi|\psi}|^2&=|\braket{C_\phi|\hat G_\phi^\dag|\psi}|^2\\
&=|\braket{C_\phi|\Pi_{k-1}\hat G_\phi^\dag|\psi}|^2\\
&\le|\braket{C_\phi|C_\phi}|^2|\braket{\psi|\hat G_\phi\Pi_{k-1}\hat G_\phi^\dag|\psi}|^2\\
&=\Tr\left[\Pi_{k-1}\hat G_\phi^\dag\ket\psi\bra\psi\hat G_\phi\right],
\ea
\label{boundoverlapPik}
\ee
where we used $\ket{C_\phi}=\Pi_{k-1}\ket{C_\phi}$ in the second line, since $\ket{C_\phi}$ is a core state of stellar rank lower than $k$ (hence its support is contained in the support of $\Pi_{k-1}$), Cauchy-Schwarz inequality in the third line and $|\braket{C_\phi|C_\phi}|^2=1$ in the last line. This upperbound is attained if
\be
\ket{C_\phi}=\frac{\Pi_{k-1}\hat G_\phi^\dag\ket\psi}{\sqrt{\Tr\left[\Pi_{k-1}\hat G_\phi^\dag\ket\psi\bra\psi\hat G_\phi\right]}},
\label{coreapprox}
\ee
which is indeed a normalised core state of stellar rank lower than $k$.

With Eqs.~(\ref{Fkinter}) and~(\ref{boundoverlapPik}), the maximum achievable fidelity with the state $\ket\psi$ is then given by
\be
\ba
\sup_{r^\star(\rho)<k}F(\rho,\psi)&=\sup_{\hat G_\phi\in\mathcal G}{\Tr\left[\Pi_{k-1}\hat G_\phi^\dag\ket\psi\bra\psi\hat G_\phi\right]}\\
&=\sup_{\hat G\in\mathcal G}{\Tr\left[\Pi_{k-1}\hat G\ket\psi\bra\psi\hat G^\dag\right]},
\ea
\ee
where the supremum is over Gaussian unitary operations and where we used the fact that the set of Gaussian unitary operations is invariant under adjoint in the second line. With Eq.~(\ref{phiGC}), assuming the optimisation yields an optimal Gaussian unitary $\hat G_0$, an optimal approximating state is $\ket\phi=\hat G_\phi\ket{C_\phi}$, where $\hat G_\phi=\hat G_0^\dag$ and where $\ket{C_\phi}$ is given by Eq.~(\ref{coreapprox}). Namely,
\be
\ket\phi=\hat G_0^\dag\left(\frac{\Pi_{k-1}\hat G_0\ket\psi}{\left\|\Pi_{k-1}\hat G_0\ket\psi\right\|}\right),
\ee
i.e., $\hat G_0\ket\phi$ is the renormalised truncation of $\hat G_0\ket\psi$ at photon number $k-1$.

\qed

%--------------------------------------------------------------------------------

\section{Certifying the stellar rank}

\subsection{Achievable fidelities with states of stellar rank 1}
\label{app:core01}

\begin{figure}[h!]
	\begin{center}
		\includegraphics[width=0.45\columnwidth]{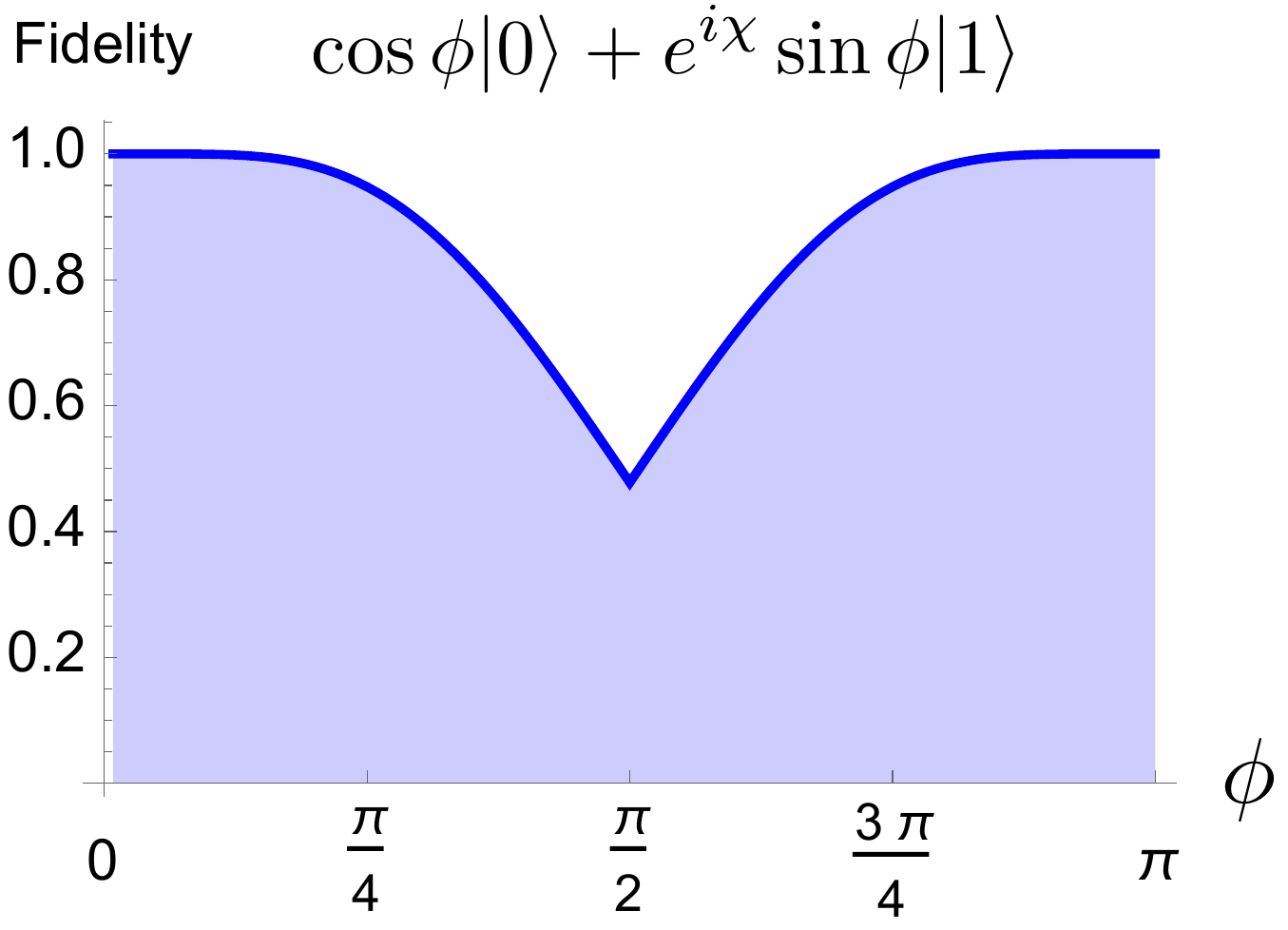}
		\caption{Achievable fidelities with target core states $\cos\phi\ket0+e^{i\chi}\sin\phi\ket1$ using Gaussian states, for all $\phi,\chi\in[0,2\pi]$, as a function of $\phi$.}
		\label{fig:R01}
	\end{center}
\end{figure}

\noindent We have computed numerically the maximum achievable fidelities between Gaussian states and states of the form
\be
\cos\phi\ket0+e^{i\chi}\sin\phi\ket1.
\ee
These are independent of $\chi$ and are depicted in Fig.~\ref{fig:R01} as a function of $\phi$. 

The most robust state of stellar rank $1$ is the Fock state $\ket1$ (and all the states that are obtained from it with Gaussian unitary operations), corresponding to $\phi=\frac\pi2$, as it features the minimal achievable fidelity with Gaussian states.

%--------------------------------------------------------------------------------

\subsection{Achievable fidelities with Fock states}
\label{app:RFock}

\noindent In this section, we prove the following result:

\begin{lem} \label{lem:RFock}
Let $k\in\mathbb N^*$ and let $n\in\mathbb N$. Then, the maximum achievable fidelity with the Fock state $\ket n$ using states of stellar rank less than $k$ is given by
\be
\sup_{r^\star(\rho)<k}\braket{n|\rho|n}=\sup_{\xi,\alpha\in\mathbb C}{\sum_{m=0}^{k-1}{|u_{m,n}(\xi,\alpha)|^2}},
\ee
where for all $m\in\{0,\dots,k-1\}$ and all $\xi=re^{i\theta},\alpha\in\mathbb C$, 
\be
u_{m,n}(\xi,\alpha)=\frac1{\sqrt{m!n!c_r}}\left[\partial_z^n(c_rz+s_re^{i\theta}\partial_z-\alpha^*)^me^{-\frac12e^{-i\theta}t_rz^2+\frac\alpha{c_r}z+\frac12e^{i\theta}t_r\alpha^2-\frac12|\alpha|^2}\right]_{z=0},
\ee
with $c_r=\cosh r$, $s_r=\sinh r$ and $t_r=\tanh r$. Moreover, assuming the optimisation yields values $\xi_0,\alpha_0\in\mathbb C$, an optimal approximating state is
\be
\hat D^\dag(\alpha_0)\hat S^\dag(\xi_0)\left(\frac{\Pi_{k-1}\hat S(\xi_0)\hat D(\alpha_0)\ket{n}}{\left\|\Pi_{k-1}\hat S(\xi_0)\hat D(\alpha_0)\ket{n}\right\|}\right).
\ee
\end{lem}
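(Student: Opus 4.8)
The plan is to specialize Theorem~\ref{th:robust} to the target $\ket\psi=\ket n$ and then reduce everything to a single closed-form matrix element. Applying Theorem~\ref{th:robust}, the maximum achievable fidelity equals $\sup_{\hat G\in\G}\Tr[\Pi_{k-1}\hat G\ket n\!\bra n\hat G^\dag]$. Writing every single-mode Gaussian unitary as $\hat G=\hat S(\xi)\hat D(\alpha)$ and inserting $\Pi_{k-1}=\sum_{m=0}^{k-1}\ket m\!\bra m$, the trace collapses to $\sum_{m=0}^{k-1}|\braket{m|\hat S(\xi)\hat D(\alpha)|n}|^2$, which is exactly $\sum_{m=0}^{k-1}|u_{m,n}(\xi,\alpha)|^2$ once we set $u_{m,n}(\xi,\alpha):=\braket{m|\hat S(\xi)\hat D(\alpha)|n}$. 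The optimal approximating state then requires no further work: it is read off directly from the second part of Theorem~\ref{th:robust} with $\hat G_0=\hat S(\xi_0)\hat D(\alpha_0)$, giving $\hat G_0^\dag(\Pi_{k-1}\hat G_0\ket n/\|\Pi_{k-1}\hat G_0\ket n\|)$, i.e. precisely the stated expression. So the entire content of the Lemma is the evaluation of $u_{m,n}(\xi,\alpha)$ in closed form.

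To compute $u_{m,n}$ I would pass to the stellar representation, using the correspondences $\hat a^\dag\mapsto z$ and $\hat a\mapsto\partial_z$ of Eq.~(\ref{castellar}) acting on the stellar function $F^\star$. First I would strip off the displacement cleanly: since $\hat D(\alpha)\hat a^\dag\hat D^\dag(\alpha)=\hat a^\dag-\alpha^*$, one has $\hat D(\alpha)\ket n=\frac1{\sqrt{n!}}(\hat a^\dag-\alpha^*)^n\ket\alpha$, producing the bare $-\alpha^*$ that appears in the Lemma. Next I would conjugate the creation operator by the squeezing; a Baker--Campbell--Hausdorff computation with the paper's convention $\hat S(\xi)=e^{\frac12(\xi\hat a^2-\xi^*\hat a^{\dag2})}$ gives $\hat S(\xi)\hat a^\dag\hat S^\dag(\xi)=c_r\hat a^\dag+e^{i\theta}s_r\hat a$ for $\xi=re^{i\theta}$, whose stellar image is exactly $c_rz+e^{i\theta}s_r\partial_z$. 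Hence $\hat S(\xi)\hat D(\alpha)\ket n=\frac1{\sqrt{n!}}[\hat S(\xi)\hat D(\alpha)\hat a^\dag\hat D^\dag(\alpha)\hat S^\dag(\xi)]^n\ket G$, with base Gaussian state $\ket G=\hat S(\xi)\hat D(\alpha)\ket0$, so the operator $(c_rz+s_re^{i\theta}\partial_z-\alpha^*)$ of the Lemma is the stellar image of $\hat S(\xi)\hat D(\alpha)\hat a^\dag\hat D^\dag(\alpha)\hat S^\dag(\xi)$. The Gaussian prefactor $G^\star(z)=\frac1{\sqrt{c_r}}\exp(-\tfrac12e^{-i\theta}t_rz^2+\tfrac\alpha{c_r}z+\tfrac12e^{i\theta}t_r\alpha^2-\tfrac12|\alpha|^2)$ is the stellar function of $\ket G$, obtained directly from $F^\star_G(z)=e^{\frac12|z|^2}\braket{G|z}$ and the standard coherent-state/squeezed-coherent-state overlap. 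Extracting the relevant Fock coefficient by differentiating the resulting holomorphic function and evaluating at $z=0$, together with the $1/\sqrt{m!\,n!\,c_r}$ normalisation, yields the stated $u_{m,n}$.

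The only delicate point — and the main obstacle — is reconciling the placement of the derivative order and the operator power with the complex conjugation built into $F^\star_\psi(z)=e^{\frac12|z|^2}\braket{\psi|z}$. Extracting $\braket{m|\psi}$ from $F^\star_\psi$ introduces an overall conjugation and naturally produces $\partial_z^m(\cdots)^n$, whereas the Lemma displays $\partial_z^n(\cdots)^m$; the clean way to land on the stated form is to use the adjoint identity $u_{m,n}=\overline{\braket{n|\hat D^\dag(\alpha)\hat S^\dag(\xi)|m}}$ (equivalently, to expand $\ket n=\tfrac1{\sqrt{n!}}\partial_\beta^n e^{\beta\hat a^\dag}\ket0|_{\beta=0}$ with an unnormalised coherent state, so the ket index emerges as an outer $\partial_z^n$ and the displacement contributes the factor $e^{-\beta\alpha^*}$), which swaps the two indices and whose overall bar is absorbed by the phase conventions $e^{-i\theta}\mapsto e^{i\theta}$. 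Verifying that these conjugation and phase conventions line up exactly, and collecting the constants $\tfrac12e^{i\theta}t_r\alpha^2-\tfrac12|\alpha|^2$, is the only bookkeeping that demands care; everything else is routine once the stellar dictionary and the Bogoliubov relation are in place.
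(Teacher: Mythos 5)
Your proposal follows essentially the same route as the paper's proof: reduce to Theorem~\ref{th:robust}, decompose the Gaussian unitary into a squeezing and a displacement, and evaluate the resulting Fock matrix elements in the stellar representation by conjugating $\hat a^\dag$ through $\hat S(\xi)\hat D(\alpha)$ (via $\hat S\hat D\,\hat a^\dag\hat D^\dag\hat S^\dag=c_r\hat a^\dag+s_re^{i\theta}\hat a-\alpha^*$) and letting the result act on the stellar function of $\hat S(\xi)\hat D(\alpha)\ket0$; the Bogoliubov relation, the Gaussian seed and the reading-off of the optimal approximating state all match. The one genuine divergence is precisely the ``delicate point'' you flag, and your proposed resolution is not quite right as written. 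The paper avoids the index swap by parametrising $\hat G=\hat D^\dag(\alpha)\hat S^\dag(\xi)$ from the start, so the trace collapses directly to $\sum_{m=0}^{k-1}|\braket{n|\hat S(\xi)\hat D(\alpha)|m}|^2$ and it is the $(\hat a^\dag)^m$ coming from the core-state side $\ket m$ that is pushed through $\hat S\hat D$, yielding $\partial_z^n(c_rz+s_re^{i\theta}\partial_z-\alpha^*)^m$ with the indices already in the stated order. Your fix $u_{m,n}=\overline{\braket{n|\hat D^\dag(\alpha)\hat S^\dag(\xi)|m}}$ does not literally produce the stated closed form: $\hat D^\dag(\alpha)\hat S^\dag(\xi)=\hat D(-\alpha)\hat S(-\xi)$ has the displacement and squeezing in the wrong order, so a further commutation reparametrising $\alpha$ (and a conjugation touching $\alpha$ as well as $\theta$) is still required. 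This is harmless for the supremum, which ranges over all $(\xi,\alpha)$, but it means your explicit $u_{m,n}$ would agree with the Lemma's only after relabelling the parameters; adopting the paper's parametrisation at the outset removes the bookkeeping entirely.
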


\begin{proof} 

From Theorem~\ref{th:robust} we have
\be
\ba
\sup_{r^\star(\rho)<k}F(\rho,n)&=\sup_{r^\star(\rho)<k}\braket{n|\rho|n}\\
&=\sup_{\hat G\in\mathcal G}{\Tr\left[\Pi_{k-1}\hat G\ket n\bra n\hat G^\dag\right]}\\
&=\sup_{\xi,\alpha\in\mathbb C}{\Tr\left[\Pi_{k-1}\hat D^\dag(\alpha)\hat S^\dag(\xi)\ket{n}\bra{n}\hat S(\xi)\hat D(\alpha)\right]}\\
&=\sup_{\xi,\alpha\in\mathbb C}{\sum_{m=0}^{k-1}\left|\braket{n|\hat S(\xi)\hat D(\alpha)|m}\right|^2},
\ea
\label{robusku}
\ee
where we used the fact that any single-mode Gaussian unitary operation may be decomposed as a squeezing and a displacement. Assuming the optimisation yields values $\xi_0,\alpha_0\in\mathbb C$, the optimal core state used in the approximation is
\be
\ket C=\frac{\Pi_{k-1}\hat S(\xi_0)\hat D(\alpha_0)\ket{n}}{\left\|\Pi_{k-1}\hat S(\xi_0)\hat D(\alpha_0)\ket{n}\right\|}.
\label{coreopti}
\ee
Now for all $m\in\{0,\dots,k-1\}$,
\be
\ba
\braket{n|\hat S(\xi)\hat D(\alpha)|m}&=\frac1{\sqrt{m!n!}}\braket{0|\hat a^n\hat S(\xi)\hat D(\alpha)(\hat a^\dag)^m|0}\\
&=\frac1{\sqrt{m!n!}}\braket{0|\hat a^n(c_r\hat a^\dag+s_re^{i\theta}\hat a-\alpha^*)^m\hat S(\xi)\hat D(\alpha)|0},
\ea
\label{FockSD}
\ee
with $c_r=\cosh r$, $s_r=\sinh r$. Hereafter we also set $t_r=\tanh r$. We have $\braket{0|\chi}=F^\star_\chi(0)$ for all states $\chi$, hence switching to the stellar representation~\cite{chabaud2020stellar} we obtain
\be
\braket{n|\hat S(\xi)\hat D(\alpha)|m}=\frac1{\sqrt{m!n!c_r}}\left[\partial_z^n\left(c_rz+s_re^{i\theta}\partial_z-\alpha^*\right)^me^{-\frac12e^{-i\theta}t_rz^2+\frac\alpha{c_r}z+\frac12e^{i\theta}t_r\alpha^2-\frac12|\alpha|^2}\right]_{z=0}.
\ee
Setting, for all $m\in\{0,\dots,k-1\}$, $u_{m,n}(\xi,\alpha)=\braket{n|\hat S(\xi)\hat D(\alpha)|m}$, we finally obtain with Eq.~(\ref{robusku}),
\be
\sup_{r^\star(\rho)<k}\braket{n|\rho|n}=\sup_{\xi,\alpha\in\mathbb C}{\sum_{m=0}^{k-1}{|u_{m,n}(\xi,\alpha)|^2}},
\ee
where for all $m\in\{0,\dots,k-1\}$ and all $\xi=re^{i\theta},\alpha\in\mathbb C$, 
\be
u_m(\xi,\alpha)=\frac1{\sqrt{m!n!c_r}}\left[\partial_z^n(c_rz+s_re^{i\theta}\partial_z-\alpha^*)^me^{-\frac12e^{-i\theta}t_rz^2+\frac\alpha{c_r}z+\frac12e^{i\theta}t_r\alpha^2-\frac12|\alpha|^2}\right]_{z=0},
\ee
with $c_r=\cosh r$, $s_r=\sinh r$ and $t_r=\tanh r$. Moreover, assuming the optimisation yields values $\xi_0,\alpha_0\in\mathbb C$, an optimal approximating state is $\ket\phi=\hat D^\dag(\alpha_0)\hat S^\dag(\xi_0)\ket C$, where $\ket C$ is defined in Eq.~(\ref{coreopti}). Namely,
\be
\ket\phi=\hat D^\dag(\alpha_0)\hat S^\dag(\xi_0)\left(\frac{\Pi_{k-1}\hat S(\xi_0)\hat D(\alpha_0)\ket{n}}{\left\|\Pi_{k-1}\hat S(\xi_0)\hat D(\alpha_0)\ket{n}\right\|}\right),
\ee
i.e., $\hat S(\xi_0)\hat D(\alpha_0)\ket\phi$ is the renormalised truncation of $\hat S(\xi_0)\hat D(\alpha_0)\ket{C_\psi}$ at photon number $k-1$.

\end{proof}

\noindent From this expression we have obtained numerically the maximum achievable fidelities up to $n=5$. The table below summarises the results, with the convention $r=k+1$.

\begin{table}
\centering
\setlength\tabcolsep{0.1pt}
\bgroup
\def\arraystretch{2}
\begin{tabular}{| c | c | c | c | c | c | c |}
\hline
\diagbox[height=7ex,width=5em]{$\quad\; n$}{$r\;\;$} & \,\quad0\quad\, & \,\quad1\quad\, & \,\quad2\quad\, & \,\quad3\quad\, & \,\quad4\quad\, & \,\quad5\quad\,\\
\hline
0 & 1 & 1 & 1 & 1 & 1 & 1\\
\hline
1 & 0.478 & 1 & 1 & 1 & 1 & 1\\
\hline
2 & 0.381 & 0.557 & 1 & 1 & 1 & 1\\
\hline
3 & 0.333 & 0.462 & 0.593 & 1 & 1 & 1\\
\hline
4 & 0.301 & 0.409 & 0.501 & 0.612 & 1 & 1\\
\hline
5 & 0.279 & 0.374 & 0.449 & 0.525 & 0.626 & 1\\
\hline
\end{tabular}
\egroup
\caption{Maximum achievable fidelities with Fock states $\ket n$ using states with stellar rank less or equal to $r$, for $n,r\in\{0,\dots,5\}$.}
\label{tab:probaest}
\end{table}

\end{document}